% Formatting guidelines:
% https://ieee-cssletters.dei.unipd.it/Page_authors.php?p=1
% https://muug.ca/mirror/ctan/macros/latex/contrib/IEEEtran/IEEEtran_HOWTO.pdf

\documentclass[journal,twoside,web,final]{ieeecolor}

% \proof already exists in IEEE template for legacy reasons, so we remove its definition to be able to avoid a conflict with the \proof and \endproof from the amsthm package.

% \labelindent already exists in IEEE template for legacy reasons, so we remove its definition to be able to avoid a conflict with the \labelindex from the enumitem package.

% \NAT@parse already exists in the IEEE template, however it is a fake command for compatibility reasons, so we undefine it to enable citations to be hyperlinked by the hyeprref package.
\makeatletter
\let\NAT@parse\undefined
\makeatother

% IEEEcolor defines \let\endfigure\end@float and \let\endtable\end@float, which raises warnings when using the subfig package, so we modify the definitions to \def\endfigure{\end@float} and \def\endtable{\end@float} instead.
\def\endfigure{\end@float}
\def\endtable{\end@float}

\usepackage{amsfonts}
\usepackage{amsmath}
\usepackage{amssymb}
\usepackage{amsthm}
\usepackage{accents} % Must be loaded after the AMS packages.
\usepackage{bm}
\usepackage{cite}
\usepackage{enumitem}
\usepackage{graphicx}
\usepackage{mathtools}
\usepackage[caption=false]{subfig}
\usepackage{textcomp}
\usepackage{tikz}
\usepackage{xurl}
\usepackage[hidelinks]{hyperref} % Should be loaded after other packages in general, since it redefines many LaTeX commands.
\usepackage{cleveref} % Must be loaded after the hyperref package, see Section 14.1 of the cleveref documentation.

% Styling for BibTeX from IEEE template.
\def\BibTeX{{\rm B\kern-.05em{\sc i\kern-.025em b}\kern-.08em
    T\kern-.1667em\lower.7ex\hbox{E}\kern-.125emX}}

% Set certain IEEE L-CSS related styling since the corresponding lcsys.sty file was removed for the arXiv upload, and this styling is needed for the document to compile.
\definecolor{subsectioncolor}{rgb}{0.067,0.627,0.859}
\setlength{\firstpagerule}{37.75pc}

% Command definitions.
\DeclareMathOperator*{\minimize}{minimize\quad}
\DeclareMathOperator*{\subjectto}{subject\ to\quad}
\DeclareMathOperator*{\tr}{tr}

% Theorem environment definitions.
\theoremstyle{definition}
\newtheorem{assumption}{Assumption}
\theoremstyle{definition}
\newtheorem{definition}{Definition}
\theoremstyle{plain}
\newtheorem{lemma}{Lemma}
\theoremstyle{plain}
\newtheorem{problem}{Problem}
\theoremstyle{plain}
\newtheorem{proposition}{Proposition}
\theoremstyle{remark}

\theoremstyle{plain}
\newtheorem{theorem}{Theorem}

% Set cleveref names.
\crefname{assumption}{}{}
\Crefname{assumption}{Assumption}{Assumptions}
\crefname{equation}{}{}
\Crefname{equation}{Equation}{Equations}
\crefname{figure}{}{}
\Crefname{figure}{Fig.}{Figs.}
\crefname{problem}{}{}
\Crefname{problem}{Problem}{Problems}

% Set enumerate environment, counter, and cleveref alias for lists of assumptions.
\newlist{assumptionenum}{enumerate}{1}
\setlist[assumptionenum]{label=(\alph*), ref=\theassumption(\alph*)}
\crefalias{assumptionenumi}{assumption}

% Set subfigure caption positioning and referencing with cleveref.
\captionsetup[subfloat]{
    position = top,
    captionskip = -2pt,
    farskip = 5pt,
    nearskip = 0pt,
    topadjust = 0pt,
    subrefformat = simple,
    labelformat = simple,
    listofformat = subsimple,
}

% Set separation between double column float and text so references fit on last page.
\dbltextfloatsep 0.5\baselineskip

% Allow equations to split across columns/pages.
\allowdisplaybreaks

\title{
    \LARGE Risk-Aware Finite-Horizon Social Optimal Control of \\ Mean-Field Coupled Linear-Quadratic Subsystems
}

\author{
    Dhairya~Patel and Margaret~P.~Chapman,~\IEEEmembership{Member,~IEEE}%
    \vspace{-10mm}%
    \thanks{
        This research is supported by the Edward S. Rogers Sr. Department of Electrical and Computer Engineering (ECE), University of Toronto, and the Natural Sciences and Engineering Research Council of Canada (NSERC) Discovery Grants Program, [RGPIN-2022-04140]. Cette recherche a \'{e}t\'{e} financée par le Conseil de recherches en sciences naturelles et en g\'{e}nie du Canada (CRSNG).
    }%
    \thanks{
        D.P. and M.P.C. are with the Edward S. Rogers Sr. Department of ECE, University of Toronto, 10 King's College Road, Toronto, Ontario M5S 3G8 Canada (e-mail: \url{dhairya.patel@mail.utoronto.ca}).
    }
}

\pagestyle{empty}

\begin{document}

\maketitle

\thispagestyle{empty}

% Copyright notice necessary for submitting IEEE accepted article.
\begin{tikzpicture}[remember picture, overlay]
    \node[anchor = south, yshift = 2.5pt] at (current page.south) {%
        \fbox{%
            \parbox{%
                \dimexpr0.9\textwidth-\fboxsep-\fboxrule\relax%
            }{%
                \footnotesize \textcopyright \ \the\year{} IEEE. Personal use of this material is permitted. Permission from IEEE must be obtained for all other uses, in any current or future media, including reprinting/republishing this material for advertising or promotional purposes, creating new collective works, for resale or redistribution to servers or lists, or reuse of any copyrighted component of this work in other works.
            }%
        }%
    };
\end{tikzpicture}
\vspace{-\baselineskip}

\begin{abstract}
    \hbadness=10000
    We formulate and solve an optimal control problem with cooperative, mean-field coupled linear-quadratic subsystems and additional \emph{risk-aware} costs depending on the covariance and skew of the disturbance. This problem quantifies the variability of the subsystem state energy rather than merely its expectation. In contrast to related work, we develop an alternative approach that illuminates a family of matrices with many analytical properties, which are useful for effectively extracting the mean-field coupled solution from a standard LQR solution.
\end{abstract}

\begin{IEEEkeywords}
    \hbadness=10000
    Cooperative control, linear systems, stochastic optimal control.
    \vspace{-3mm}
\end{IEEEkeywords}

\bstctlcite{IEEEbstcontrol}

\section{Introduction}
\label{sec:introduction}

\IEEEPARstart{T}{eam} theory concerns the control of complex systems consisting of many cooperating subsystems \cite[Def. 18.1]{vanSchuppen2015coordination}. This has diverse applications, such as coordinating fleets of autonomous vehicles \cite{guney2018schedulingdriven,liu2006motion}, signalling and traffic \cite{boel2015leaderfollower}, and sensor and surveillance networks \cite{phoompat2020coordinated,kucevic2021reducing}. In such settings, we may require a controller with minimal information sharing between subsystems for practical reasons, including limited communication bandwidth or computational power. This is difficult in general, so we study a model where the \emph{mean-field}---the mean of all subsystem states---turns out to be the only information shared, which can be implemented in practice using a simple microcontroller that receives each subsystem state, then sends back the computed mean. Furthermore, we focus on a stochastic model with individual linear dynamics and a common quadratic objective, leading to the linear-quadratic regulator (LQR) problem for \emph{mean-field coupled subsystems}, and a corresponding \emph{social} optimal control solution.\footnote{Stochastic subsystems interacting through their mean-field have been of recent interest, primarily in the alternative mean-field \emph{games} setting, where one finds the \emph{Nash} optimal control solution for noncooperative subsystems optimizing individual objectives, e.g., see \cite{başar2021robust,jun2014discrete,uzZaman2020reinforcement,aydın2023robustness}. Applications for mean-field games can correspond to applications for mean-field coupled social optimal control when the subsystems are assumed to be cooperative instead, e.g., see \cite{nourian2013nash}.}

Social optimal controls for mean-field coupled subsystems have been developed in continuous- and discrete-time linear-quadratic settings \cite{wang2021indefinite,du2022social,arabneydi2015teamoptimal}. The continuous-time approaches in \cite{wang2021indefinite,du2022social} consider only Wiener process disturbances, limiting the systems to which they may be applied. The approaches in \cite{wang2021indefinite,du2022social,arabneydi2015teamoptimal} are \emph{risk-neutral} as they consider only the expected standard cost, thereby limiting their ability to avoid or compensate for dangerous, but possibly infrequent, disturbances with catastrophic consequences \cite{wang2022risk}. This motivates the formulation of a \emph{risk-aware} problem that incorporates further information about the distribution of possible outcomes to mitigate high-risk scenarios.

While the field of control systems traditionally has focused on analysis and design in the average case or in the worst case, \emph{risk-aware control theory} is the study of control systems with respect to diverse characterizations of future possibilities between the average case and the worst case, offering enhanced flexibility compared to the standard paradigms \cite{wang2022risk}.\footnote{While risk-aware control is a main aspect of this work, distributionally robust control (e.g., optimizing a worst-case expectation) also does not merely rely on the expected standard cost to quantify uncertain outcomes. Distributionally robust risk-aware control has been studied as well \cite{vanParys2016distributionally}.} There are many methods of equipping control systems with risk awareness. For example, exponential-of-cost penalizes increases in the objective more strongly to discourage larger costs, however it typically requires Gaussian disturbances and its risk parameter may have a limited effective range \cite{whittle1990risk,smith2023on}. Worst-case approaches may be too pessimistic and cannot handle noise distributions with noncompact support. Conditional-value-at-risk-based approaches use a risk parameter that quantifies a fraction of the worst outcomes, thereby offering more flexibility \cite{wang2022risk,chapman2021on,chapman2022risk,vanParys2016distributionally}. Predictive variance assesses risk in terms of variability \cite{tsiamis2020riskconstrained}, reflecting a classical risk assessment approach \cite{markowitz1952portfolio}, and may benefit applications that make decisions based on expectation-variability trade-offs. In a linear-quadratic setting with mild restrictions on the disturbance (e.g., the disturbance does not need to be Gaussian), predictive variance admits a quadratic form, leading to an optimal controller that depends on the covariance and skew of the disturbance and can attenuate drastic changes in the state energy effectively \cite{tsiamis2020riskconstrained,chapman2022riskaware}.

\subsubsection*{Contribution}

We formulate and solve a risk-aware social optimal control problem for mean-field coupled linear-quadratic subsystems on a finite horizon under mild assumptions on the process noise by adapting the predictive variance in \cite{tsiamis2020riskconstrained} to introduce a risk-aware cost for each subsystem. A closely related work to ours is \cite{roudneshin2023risk}, which considers an infinite-horizon, risk-constrained formulation instead. However, our work differs from and improves upon \cite{roudneshin2023risk} in crucial ways. Rather than using the common reformulation based on auxiliary and mean-field states as in \cite{arabneydi2015teamoptimal,roudneshin2023risk}, we present a centralized reformulation in terms of matrices that admit a common structure, which we call \emph{pseudo-block diagonal matrices}. In contrast to \cite{roudneshin2023risk}, we explicitly and rigorously establish a family of linear maps and its many analytical properties inherited from the Kronecker product, which allows us to translate between standard uncoupled LQR and mean-field coupled settings. Pseudo-block diagonal matrix algebra is useful for adapting established theory for uncoupled systems to mean-field coupled subsystems and is not restricted to the setting of predictive variance. Once an uncoupled LQR problem with a known solution is formulated in terms of pseudo-block diagonal matrices, pseudo-block diagonal matrix algebra can enable convenient extraction of the mean-field coupled solution from the aforementioned known solution, as we demonstrate in the process from Theorem \ref{thm:centralized_control} to Theorem \ref{thm:mean_field_coupled_control} in this work, circumventing the need to rework a standard solution procedure from the start in terms of auxiliary and mean-field states as in \cite{roudneshin2023risk}. Additionally, we demonstrate that the optimal control is mean-field coupled even with the assumption of a more general information sharing structure than in \cite{roudneshin2023risk}. More broadly, we are mathematically rigorous in our problem formulation and analysis, such as being careful to condition upon the history of \emph{all} subsystems in the definition of predictive variance unlike \cite{roudneshin2023risk}, to ensure correctness in the results presented. Finally, we express the risk as a cost, rather than a constraint, bypassing the need for primal-dual machinery and associated computational complexity.

\section{Problem Formulation and Notation}
\label{sec:problem_formulation_and_notation}

\subsubsection*{Notation}

\(\mathbb{R}\) denotes the real line. \(\mathbb{R}^n\) denotes the \(n\)-dimensional real space. \(\mathbb{R}^{n \times m}\) denotes the set of \(n \times m\) real matrices. \(\mathbb{N} = \{1, 2, \ldots\}\) and \(\mathbb{N}_0 = \{0, 1, \ldots\}\) denote the sets of natural numbers and nonnegative integers, respectively. \(\mathcal{S}^n\), \(\mathcal{S}^n_+\), and \(\mathcal{S}^n_{++}\) denote the sets of symmetric, symmetric positive semidefinite, and symmetric positive definite matrices in \(\mathbb{R}^{n \times n}\), respectively. \(I_n\) denotes the \(n \times n\) identity matrix. \(0_n\) denotes the matrix of all zeros in \(\mathbb{R}^{n \times n}\). \(\mathbf{1}_n\) is the vector of all ones in \(\mathbb{R}^n\). We define \(E_n \triangleq \frac{1}{n} \mathbf{1}_n \mathbf{1}_n^\top\). For any two matrices \(M\) and \(N\), \(M \otimes N\) denotes their Kronecker product \cite[Ch. 2]{hardy2019matrix}. \(\| \cdot \|\) is the Euclidean norm on \(\mathbb{R}^n\). For any sequence of \(k \in \mathbb{N}\) column vectors \(v^1, \ldots, v^k\), we define \(\mathbf{v} = (v^1, \ldots, v^k)\) as their vertical concatenation. If \(v^1, \ldots, v^k\) have the same dimension, \(\bar{v} \triangleq \frac{1}{k} \sum_{i = 1}^k v^i\) denotes their \emph{mean-field}. Superscripts on variables denote the subsystem index, while subscripts denote the time index. \((\Omega, \mathcal{F}, \mathbb{P})\) is a probability space upon which we define all random vectors in this work, and \(\mathbb{E}(\cdot)\) is the corresponding expectation operator. If \(X_1, \dots, X_k\) is a sequence of random (column) vectors, \(\sigma(X_1, \ldots, X_k)\) denotes the \(\sigma\)-algebra generated by the random vector \((X_1, \ldots, X_k)\) \cite[Def. 6.4.1]{ash1972probability}.

\subsubsection*{Dynamics}

In this work, we consider \(k \in \mathbb{N}\) linear-quadratic (LQ) subsystems with identical dynamics and costs over a discrete-time finite horizon \(\mathbb{T} \triangleq \{0, 1, \ldots, T - 1\}\) of length \(T \in \mathbb{N}\), with mean-field coupled dynamics and costs. The random vectors \(x_t^i\), \(u_t^i\), and \(w_{t + 1}^i\) denote the \(\mathbb{R}^n\)-valued state, \(\mathbb{R}^m\)-valued control, and \(\mathbb{R}^n\)-valued disturbance of subsystem \(i \in \mathbb{I} \triangleq \{1, \ldots, k\}\) at time \(t \in \mathbb{T}\), respectively. The mean-field coupled dynamics for each subsystem are
\begin{equation}
    \label{eq:mean_field_dynamics}
    x_{t + 1}^i = A_t x_t^i + B_t u_t^i + C_t \bar{x}_t + w_{t + 1}^i, \quad i \in \mathbb{I}, \quad t \in \mathbb{T},
\end{equation}
where \(A_t \in \mathbb{R}^{n \times n}\), \(B_t \in \mathbb{R}^{n \times m}\), and \(C_t \in \mathbb{R}^{n \times n}\) are the state, input, and mean-field coupling matrices, respectively. For each \(t \in \tilde{\mathbb{T}} \triangleq \{0, 1, \ldots, T\}\), we define the random vector \(h_t \triangleq (\mathbf{x}_0, \dots, \mathbf{x}_t, \mathbf{u}_0, \dots, \mathbf{u}_{t - 1})\) and the \(\sigma\)-algebra \(\mathcal{F}_t \triangleq \sigma(h_t)\), with \(\mathcal{F}_{-1}\) the trivial \(\sigma\)-algebra \(\{\varnothing, \Omega\}\). In this work, we use the following assumptions.

\begin{assumption}[Disturbance]
    \label{assn:disturbance}\phantom{}
    \begin{assumptionenum}
        \item \label{assn:disturbance_a} The random vectors \(\mathbf{x}_0, \mathbf{w}_1, \ldots, \mathbf{w}_T\) are independent, \(\mathbf{w}_1, \ldots, \mathbf{w}_T\) are i.i.d., and \(\mathbf{x}_0\) is deterministic.
        \item \label{assn:disturbance_b} \(w_{t + 1}^1, \ldots, w_{t + 1}^k\) are i.i.d. for each \(t \in \mathbb{T}\).
        \item \label{assn:disturbance_c} \(\mathbb{E}(\lVert w_{t + 1}^i \rVert^4) < \infty\) for each \(i \in \mathbb{I}\) and \(t \in \mathbb{T}\).
    \end{assumptionenum}
\end{assumption}

\begin{assumption}[Control]
    \label{assn:control}
    \(\mathbb{E}(\|u_t^i\|^2) < \infty\) and \(u_t^i\) is \(\mathcal{F}_t\)-measurable, denoted as \(u_t^i \in \mathcal{L}^2(\mathcal{F}_t)\), for each \(i \in \mathbb{I}\) and \(t \in \mathbb{T}\).
\end{assumption}

\Cref{assn:disturbance_a,assn:disturbance_b} are relatively standard for mean-field coupled subsystems, and we require \Cref{assn:disturbance_c} to ensure that the risk-aware objective is finite. \Cref{assn:disturbance_c} does restrict the disturbances we may consider, but still presents a significant relaxation of the common assumption of Gaussian disturbances. \Cref{assn:control} is standard and equivalent to \(\mathbb{E}(\|\mathbf{u}_t\|^2) < \infty\) with \(\mathbf{u}_t\) being \(\mathcal{F}_t\)-measurable, denoted as \(\mathbf{u}_t \in \mathcal{L}^2(\mathcal{F}_t)\), for each \(t \in \mathbb{T}\). Note that \Cref{assn:control} does not assume that \(u_t^i\) depends only on the history of the mean-field and subsystem \(i\), but rather a more general information sharing structure in which the control depends on the complete system history. We will later see that the optimal control depends only on a mean-field sharing information structure even without explicitly assuming it. Further, note that \Cref{assn:disturbance,assn:control} together with the form of the dynamics imply that \(\mathbb{E}(\|x_t^i \|^2) < \infty\) for each \(i \in \mathbb{I}\) and \(t \in \tilde{\mathbb{T}}\).

\subsubsection*{Costs}

The standard mean-field coupled per-step costs for subsystem \(i \in \mathbb{I}\) are
\begin{subequations}
    \label{eq:mean_field_costs}
    \begin{align}
        \label{eq:mean_field_cost_state}
        c_t^x(\mathbf{x}_t) &\triangleq \textstyle \sum_{i = 1}^k \bar{x}_t^\top P_t \bar{x}_t + {x_t^i}^\top Q_t x_t^i, && t \in \tilde{\mathbb{T}}, \\
        \label{eq:mean_field_cost_input}
        c_t^u(\mathbf{u}_t) &\triangleq \textstyle \sum_{i = 1}^k {u_t^i}^\top R_t u_t^i, && t \in \mathbb{T},
    \end{align}
\end{subequations}
where \(P_t \in \mathcal{S}^n_+\), \(Q_t \in \mathcal{S}^n_+\), and \(R_t \in \mathcal{S}^m_{++}\). We quantify risk-awareness using the notion of \emph{predictive variance} to develop an additional cost. The predictive variance for subsystem \(i \in \mathbb{I}\) at time \(t \in \tilde{\mathbb{T}}\) is \(\mathbb{E}({\Delta_t^i}^2)\), where the \emph{state-energy prediction error} \(\Delta_t^i\) is defined as
\begin{equation}
    \label{eq:prediction_error}
    \Delta_t^i \triangleq {x_t^i}^\top Q_t x_t^i - \mathbb{E}\left({x_t^i}^\top Q_t x_t^i \, \middle| \, \mathcal{F}_{t - 1}\right).
\end{equation}
We adapt this subsystem-specific definition from \cite{tsiamis2020riskconstrained}, which introduced an analogous notion of the predictive variance of a single uncoupled system to formulate a risk-constrained problem. Now, we can define the \emph{risk-aware mean-field coupled per-step cost} for a user-specified risk-parameter \(\lambda \geq 0\) as
\begin{equation}
    \label{eq:mean_field_cost_risk_aware}
    \textstyle c_t^\Delta(\mathbf{x}_t) \triangleq \lambda \sum_{i = 1}^k {\Delta_t^i}^2, \quad t \in \tilde{\mathbb{T}}.
\end{equation}
With these per-step costs, we formulate the finite-horizon random (cost) variable as
\begin{equation*}
    \label{eq:mean_field_objective}
    \textstyle J(\mathbf{u}) \triangleq c_T^x(\mathbf{x}_T) + c_T^\Delta(\mathbf{x}_T) + \sum_{t = 0}^{T - 1} c_t^x(\mathbf{x}_t) + c_t^\Delta(\mathbf{x}_t) + c_t^u(\mathbf{u}_t),
\end{equation*}
where \(\mathbf{u} \triangleq (\mathbf{u}_0, \mathbf{u}_1, \ldots, \mathbf{u}_{T-1})\).

\subsubsection*{Risk-Aware Optimal Control Problem}

We define the problem of interest as minimizing the expectation of \(J(\mathbf{u})\) subject to the dynamics and assumptions above. By including the predictive variance in the objective \(\mathbb{E}(J(\mathbf{u}))\) through the nonstandard costs \(c_t^\Delta\) \cref{eq:mean_field_cost_risk_aware}, this problem is risk-aware because it considers the average state energy in addition to its \emph{variability}.
\begin{problem}
    \label{prob:mean_field_optimal_control_problem}
    Under \Cref{assn:disturbance,assn:control}, the risk-aware finite-horizon optimal control problem  is
    \begin{equation}
        \label{eq:mean_field_optimal_control_problem}
        \begin{alignedat}{2}
            &\minimize  && \mathbb{E}\left(J(\mathbf{u})\right) \\
            &\subjectto && x_{t + 1}^i = A_t x_t^i + B_t u_t^i + C_t \bar{x}_t + w_{t + 1}^i, \\
            &           && u_t^i \in \mathcal{L}^2(\mathcal{F}_t), \quad i \in \mathbb{I}, \quad t \in \mathbb{T}.
        \end{alignedat}
    \end{equation}
\end{problem}

\section{Centralized Reformulation}
\label{sec:centralized_reformulation}

Toward solving \Cref{prob:mean_field_optimal_control_problem}, in this section we reformulate this problem into a centralized form that resembles typical LQR by removing explicit dependence on the mean-field. A useful observation to help derive the centralized reformulation is that the mean-field \(\bar{x}_t\) is linear in the full system state \(\mathbf{x}_t\):
\begin{equation}
    \label{eq:mean_field_reformulation}
    \textstyle \bar{x}_t = \frac{1}{k} \begin{bmatrix} I_n & \cdots & I_n \end{bmatrix} \mathbf{x}_t = \frac{1}{k} (\mathbf{1}_k^\top \otimes I_n) \mathbf{x}_t.
\end{equation}
Using this relation, we present the centralized system below.

\begin{proposition}
    \label{prop:centralized_dynamics}
    \Cref{eq:mean_field_dynamics} can be rewritten as
    \begin{equation}
        \label{eq:centralized_dynamics}
        \mathbf{x}_{t + 1} = \tilde{A}_t \mathbf{x}_t + \tilde{B}_t \mathbf{u}_t + \mathbf{w}_{t + 1}, \quad t \in \mathbb{T},
    \end{equation}
    where \(\tilde{A}_t \triangleq I_k \otimes A_t + E_k \otimes C_t\) and \(\tilde{B}_t \triangleq I_k \otimes B_t\).
\end{proposition}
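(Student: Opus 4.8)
The plan is to verify that the stacked dynamics $\tilde{A}_t \mathbf{x}_t + \tilde{B}_t \mathbf{u}_t + \mathbf{w}_{t+1}$ reproduces the $i$-th block of $\mathbf{x}_{t+1}$ exactly as in \cref{eq:mean_field_dynamics}. Since $\mathbf{x}_{t+1} = (x_{t+1}^1, \ldots, x_{t+1}^k)$ is the vertical concatenation of the individual next-states, it suffices to examine the $i$-th block ($\mathbb{R}^n$-valued) of the right-hand side and match it against $A_t x_t^i + B_t u_t^i + C_t \bar{x}_t + w_{t+1}^i$. The $\mathbf{w}_{t+1}$ term is immediate, so the work reduces to understanding the action of $\tilde{A}_t$ and $\tilde{B}_t$ as block operators.

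First I would treat $\tilde{B}_t = I_k \otimes B_t$: by the standard block interpretation of the Kronecker product with $I_k$, this is the block-diagonal matrix $\operatorname{diag}(B_t, \ldots, B_t)$, so its action on $\mathbf{u}_t$ yields $i$-th block $B_t u_t^i$, as required. Next I would handle $\tilde{A}_t = I_k \otimes A_t + E_k \otimes C_t$ term by term. The first summand $I_k \otimes A_t$ is again block-diagonal, contributing $A_t x_t^i$ to the $i$-th block. The second summand uses $E_k = \frac{1}{k}\mathbf{1}_k \mathbf{1}_k^\top$; here I would invoke the mixed-product property of the Kronecker product to write $E_k \otimes C_t = \frac{1}{k}(\mathbf{1}_k \mathbf{1}_k^\top)\otimes C_t = \frac{1}{k}(\mathbf{1}_k \otimes C_t)(\mathbf{1}_k^\top \otimes I_n)$, or alternatively read off directly that $E_k \otimes C_t$ is the $k \times k$ block matrix whose every block equals $\frac{1}{k} C_t$. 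Acting on $\mathbf{x}_t$, every block-row then produces $\frac{1}{k} C_t \sum_{j=1}^k x_t^j = C_t \bar{x}_t$, using the definition of the mean-field $\bar{x}_t = \frac{1}{k}\sum_{j=1}^k x_t^j$. This recovers exactly the coupling term, and a cleaner way to see it is to apply \cref{eq:mean_field_reformulation}: since $(E_k \otimes C_t)\mathbf{x}_t = (\frac{1}{k}\mathbf{1}_k \otimes C_t)(\mathbf{1}_k^\top \otimes I_n)\mathbf{x}_t = (\mathbf{1}_k \otimes C_t)\bar{x}_t$, whose $i$-th block is $C_t \bar{x}_t$ for every $i$.

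Collecting the three contributions, the $i$-th block of $\tilde{A}_t \mathbf{x}_t + \tilde{B}_t \mathbf{u}_t + \mathbf{w}_{t+1}$ equals $A_t x_t^i + B_t u_t^i + C_t \bar{x}_t + w_{t+1}^i$, which coincides with $x_{t+1}^i$ from \cref{eq:mean_field_dynamics} for each $i \in \mathbb{I}$; since this holds blockwise, the concatenated identity \cref{eq:centralized_dynamics} follows. I do not anticipate a genuine obstacle here: the statement is essentially a bookkeeping exercise in Kronecker algebra. The only point requiring mild care is the correct reading of $E_k \otimes C_t$ as the all-blocks-equal-$\frac{1}{k}C_t$ matrix so that the row-sum collapses to the mean-field; this is most transparently justified via the mixed-product property and the linearity relation \cref{eq:mean_field_reformulation} already established, so I would lean on \cref{eq:mean_field_reformulation} to keep the argument short and self-contained.
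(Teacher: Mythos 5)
Your proposal is correct and follows the same route the paper takes: the paper's proof is a one-line sketch (vertically stack the subsystem dynamics, substitute the mean-field via \cref{eq:mean_field_reformulation}, and simplify), and you have simply carried out that blockwise bookkeeping explicitly. No issues.
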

\begin{proof}
    The result follows from ``vertically stacking'' the subsystem dynamics, substituting the mean-field terms using \cref{eq:mean_field_reformulation}, and then simplifying the resulting expression.
\end{proof}

\begin{proposition}
    \label{prop:centralized_costs}
    Given \(P_t \in \mathcal{S}^n_+\), \(Q_t \in \mathcal{S}^n_+\), and \(R_t \in \mathcal{S}^m_{++}\), \Cref{eq:mean_field_costs} can be rewritten as
        \label{eq:centralized_costs}
        \begin{align}
            c_t^x(\mathbf{x}_t) = \mathbf{x}_t^\top \tilde{Q}_t \mathbf{x}_t \quad \text{and} \quad
            c_t^u(\mathbf{u}_t) = \mathbf{u}_t^\top \tilde{R}_t \mathbf{u}_t,
        \end{align}
    where \(\tilde{Q}_t \triangleq I_k \otimes Q_t + E_k \otimes P_t \in \mathcal{S}^{n k}_+, \tilde{R}_t \triangleq I_k \otimes R_t \in \mathcal{S}^{m k}_{++}\).
\end{proposition}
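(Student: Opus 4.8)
The plan is to treat the two cost functions separately, starting with the input cost since it is the simpler case. The idea throughout is to recognize each sum over subsystems as a single quadratic form in the stacked vector by exploiting the block-diagonal structure that $I_k \otimes (\cdot)$ produces. For the input cost, I would observe that $I_k \otimes R_t$ is block diagonal with $R_t$ repeated $k$ times, so with $\mathbf{u}_t = (u_t^1, \ldots, u_t^k)$ we immediately obtain $\mathbf{u}_t^\top (I_k \otimes R_t) \mathbf{u}_t = \sum_{i=1}^k {u_t^i}^\top R_t u_t^i = c_t^u(\mathbf{u}_t)$.

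For the state cost, I would split $c_t^x$ into the $Q_t$-term $\sum_{i=1}^k {x_t^i}^\top Q_t x_t^i$ and the $P_t$-term $\sum_{i=1}^k \bar{x}_t^\top P_t \bar{x}_t$. The former matches the same block-diagonal pattern, giving $\mathbf{x}_t^\top (I_k \otimes Q_t)\mathbf{x}_t$. The latter summand does not depend on $i$, so the sum equals $k\,\bar{x}_t^\top P_t \bar{x}_t$; substituting the mean-field reformulation $\bar{x}_t = \frac{1}{k}(\mathbf{1}_k^\top \otimes I_n)\mathbf{x}_t$ from \cref{eq:mean_field_reformulation} then yields $\frac{1}{k}\mathbf{x}_t^\top (\mathbf{1}_k \otimes I_n) P_t (\mathbf{1}_k^\top \otimes I_n)\mathbf{x}_t$, using $(\mathbf{1}_k^\top \otimes I_n)^\top = \mathbf{1}_k \otimes I_n$.

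The key step is to simplify the inner matrix product using the mixed-product property $(A \otimes B)(C \otimes D) = (AC)\otimes(BD)$. Writing $P_t = 1 \otimes P_t$ with $1$ viewed as a $1\times 1$ matrix and applying the property twice gives $(\mathbf{1}_k \otimes I_n) P_t (\mathbf{1}_k^\top \otimes I_n) = (\mathbf{1}_k \mathbf{1}_k^\top) \otimes P_t$, so the $P_t$-term becomes $\mathbf{x}_t^\top\bigl(\frac{1}{k}\mathbf{1}_k\mathbf{1}_k^\top \otimes P_t\bigr)\mathbf{x}_t = \mathbf{x}_t^\top (E_k \otimes P_t)\mathbf{x}_t$ by the definition $E_k = \frac{1}{k}\mathbf{1}_k\mathbf{1}_k^\top$. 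Adding the two terms and factoring gives $c_t^x(\mathbf{x}_t) = \mathbf{x}_t^\top(I_k \otimes Q_t + E_k \otimes P_t)\mathbf{x}_t = \mathbf{x}_t^\top \tilde{Q}_t \mathbf{x}_t$. I expect the bookkeeping of the scalar factors to be the main point requiring care: the factor $k$ from collapsing the sum cancels one factor of $1/k$ from $\bar{x}_t$, leaving precisely the remaining $1/k$ needed to assemble $E_k$.

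Finally, I would verify the claimed matrix memberships. Symmetry of $\tilde{R}_t$ and $\tilde{Q}_t$ follows from $(A \otimes B)^\top = A^\top \otimes B^\top$ together with symmetry of $I_k$, $E_k$, $R_t$, $Q_t$, and $P_t$ and closure of the symmetric matrices under addition. For definiteness I would invoke that the Kronecker product preserves (semi)definiteness: $R_t \in \mathcal{S}^m_{++}$ and $I_k \in \mathcal{S}^k_{++}$ give $\tilde{R}_t \in \mathcal{S}^{mk}_{++}$, while $E_k = \frac{1}{k}\mathbf{1}_k\mathbf{1}_k^\top \in \mathcal{S}^k_+$ and $I_k \in \mathcal{S}^k_+$ imply that both $I_k \otimes Q_t$ and $E_k \otimes P_t$ lie in $\mathcal{S}^{nk}_+$, whence their sum $\tilde{Q}_t$ does as well. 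All of the Kronecker identities used here are standard \cite[Ch. 2]{hardy2019matrix}.
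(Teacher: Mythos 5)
Your proposal is correct and follows essentially the same route as the paper's proof: block-diagonal structure of $I_k \otimes (\cdot)$ for the input and $Q_t$ terms, the mean-field reformulation \cref{eq:mean_field_reformulation} combined with the mixed-product property (writing $P_t = 1 \otimes P_t$) to assemble $E_k \otimes P_t$, and preservation of (semi)definiteness under the Kronecker product for the memberships. One minor point in your favor: you correctly use only $E_k \in \mathcal{S}^k_+$ (which suffices, since $E_k = \frac{1}{k}\mathbf{1}_k\mathbf{1}_k^\top$ has rank one and is not positive definite for $k \geq 2$), whereas the paper invokes its \Cref{lemma:e_k_properties} claiming $E_k \in \mathcal{S}^k_{++}$, a claim that is not needed here and is in fact incorrect as stated.
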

\begin{proof}
    First, standard linear algebra results in \(c_t^u(\mathbf{u}_t) = \mathbf{u}_t^\top \left(I_k \otimes R_t\right) \mathbf{u}_t\) and \(\sum_{i = 1}^k {x_t^i}^\top Q_t x_t^i = \mathbf{x}_t^\top \left(I_k \otimes Q_t\right) \mathbf{x}_t\).
    Also, we have
    \begin{align*}
        k \bar{x}_t^\top P_t \bar{x}_t
        &=  \textstyle \frac{1}{k} ((\mathbf{1}_k^\top \otimes I_n) \mathbf{x}_t)^\top P_t ((\mathbf{1}_k^\top \otimes I_n) \mathbf{x}_t) \\
        &= \textstyle \frac{1}{k} \mathbf{x}_t^\top (\mathbf{1}_k \otimes I_n) (1 \otimes P_t) (\mathbf{1}_k^\top \otimes I_n)\mathbf{x}_t \\
        &=  \mathbf{x}_t^\top \left(E_k \otimes P_t\right) \mathbf{x}_t.
    \end{align*}

    Now, note that if \(X\) and \(Y\) are symmetric positive definite (resp., semidefinite), then \(X \otimes Y\) is symmetric positive definite (resp., semidefinite) \cite[p. 116]{hardy2019matrix}. \(I_k\) is symmetric positive definite, and we verify that \(E_k\) is symmetric positive definite in a subsequent lemma (\Cref{lemma:e_k_properties}). So, \(I_k \otimes Q_t, E_k \otimes P_t \in \mathcal{S}^{n k}_+\) and \(I_k \otimes R_t \in \mathcal{S}^{m k}_{++}\), therefore \(\tilde{Q}_t \in \mathcal{S}^{n k}_+\) and \(\tilde{R}_t \in \mathcal{S}^{m k}_{++}\).
\end{proof}

\Cref{prop:centralized_dynamics} together with \Cref{assn:disturbance,assn:control} can be used to show that \(w_{t}^i\) and \(h_{t-1}\) are independent for each \(i \in \mathbb{I}\) and \(t \in \{1, \ldots, T\}\), which is useful for reducing the risk-aware cost to a quadratic form in the following proposition.

\begin{proposition}
    \label{prop:centralized_cost_risk_aware}
    Under \Cref{assn:disturbance,assn:control}, we have
    \begin{equation*}
        \begin{aligned}
                \mathbb{E}\left(c_t^\Delta(\mathbf{x}_t)\right)
            &=  4 \lambda \mathbb{E}\left(\mathbf{x}_t^\top (I_k \otimes Q_t \Sigma Q_t) \mathbf{x}_t +
                \mathbf{x}_t^\top (\mathbf{1}_k \otimes Q_t \gamma_t)\right) \\ &\phantom{{}=} +
                k \lambda \delta_t -
                4 k \lambda \tr\left((\Sigma Q_t)^2\right),
        \end{aligned}
    \end{equation*}
    where \(c_t^\Delta(\mathbf{x}_t)\) is defined in \cref{eq:mean_field_cost_risk_aware}, \(\mu \triangleq \mathbb{E}(w_t^i)\), \(d_t^i \triangleq w_t^i - \mu\), \(\Sigma \triangleq \mathbb{E}(d_t^i d_t^{i\top})\), \(\gamma_t \triangleq \mathbb{E}(d_t^i d_t^{i\top} Q_t d_t^i)\), and
    \begin{align*}
        \delta_t \triangleq \mathbb{E}\left(\left(d_t^{i\top} Q_t d_t^i - \tr(\Sigma Q_t)\right)^2\right).
    \end{align*}
\end{proposition}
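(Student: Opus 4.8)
The plan is to reduce everything to the single-subsystem predictive variance, since $\mathbb{E}(c_t^\Delta(\mathbf{x}_t)) = \lambda \sum_{i=1}^k \mathbb{E}({\Delta_t^i}^2)$, and compute $\mathbb{E}({\Delta_t^i}^2)$ for each $i \in \mathbb{I}$ and each $t \in \{1, \ldots, T\}$ (the range on which $w_t^i$ and the stated independence of $w_t^i$ and $h_{t-1}$ are available). The central device is to split the state into its predictable part and the fresh disturbance: using \cref{eq:mean_field_dynamics}, I would write $x_t^i = z_t^i + d_t^i$, where $z_t^i \triangleq A_{t-1} x_{t-1}^i + B_{t-1} u_{t-1}^i + C_{t-1} \bar{x}_{t-1} + \mu$ is $\mathcal{F}_{t-1}$-measurable (by \Cref{assn:control} and the definition of $h_{t-1}$) and $d_t^i = w_t^i - \mu$ has zero mean and is independent of $\mathcal{F}_{t-1}$. \Cref{assn:disturbance_c} ensures every moment below is finite.

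First I would evaluate the conditional prediction. Expanding ${x_t^i}^\top Q_t x_t^i = {z_t^i}^\top Q_t z_t^i + 2 {z_t^i}^\top Q_t d_t^i + {d_t^i}^\top Q_t d_t^i$ and conditioning on $\mathcal{F}_{t-1}$, independence together with $\mathbb{E}(d_t^i) = 0$ and $\mathbb{E}({d_t^i}^\top Q_t d_t^i) = \tr(\Sigma Q_t)$ gives $\mathbb{E}({x_t^i}^\top Q_t x_t^i \mid \mathcal{F}_{t-1}) = {z_t^i}^\top Q_t z_t^i + \tr(\Sigma Q_t)$. Subtracting as in \cref{eq:prediction_error} yields $\Delta_t^i = 2 {z_t^i}^\top Q_t d_t^i + ({d_t^i}^\top Q_t d_t^i - \tr(\Sigma Q_t))$. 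Squaring and taking the conditional expectation term by term, the squared linear term contributes $4 {z_t^i}^\top Q_t \Sigma Q_t z_t^i$, the cross term contributes $4 {z_t^i}^\top Q_t \gamma_t$ (the third-moment/skew piece, matching the definition of $\gamma_t$ after the constant $\tr(\Sigma Q_t)$ is killed by $\mathbb{E}(d_t^i)=0$), and the squared centered quadratic term is exactly $\delta_t$. This gives $\mathbb{E}({\Delta_t^i}^2 \mid \mathcal{F}_{t-1}) = 4 {z_t^i}^\top Q_t \Sigma Q_t z_t^i + 4 {z_t^i}^\top Q_t \gamma_t + \delta_t$, and the tower property removes the conditioning.

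The step I expect to require the most care is converting the $z_t^i$-expressions back into the full state $\mathbf{x}_t$, in which the proposition is phrased. Substituting $z_t^i = x_t^i - d_t^i$ and once more using independence and $\mathbb{E}(d_t^i) = 0$, I would establish the two moment identities $\mathbb{E}({z_t^i}^\top M z_t^i) = \mathbb{E}({x_t^i}^\top M x_t^i) - \tr(M \Sigma)$ for symmetric $M$, and $\mathbb{E}({z_t^i}^\top Q_t \gamma_t) = \mathbb{E}({x_t^i}^\top Q_t \gamma_t)$. Choosing $M = Q_t \Sigma Q_t$ produces the per-subsystem correction $-\tr((\Sigma Q_t)^2)$, which is precisely the source of the constant $-4 k \lambda \tr((\Sigma Q_t)^2)$. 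The bookkeeping here, namely tracking which cross terms vanish and which one survives as the trace correction (and verifying $\tr(Q_t \Sigma Q_t \Sigma) = \tr((\Sigma Q_t)^2)$ by cyclicity), is the main obstacle; everything else is routine.

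Finally I would sum over $i$ and vectorize exactly as in the proof of \Cref{prop:centralized_costs}: the block-diagonal and stacked structures give $\sum_{i=1}^k {x_t^i}^\top M x_t^i = \mathbf{x}_t^\top (I_k \otimes M) \mathbf{x}_t$ and $\sum_{i=1}^k {x_t^i}^\top Q_t \gamma_t = \mathbf{x}_t^\top (\mathbf{1}_k \otimes Q_t \gamma_t)$. Applying these with $M = Q_t \Sigma Q_t$, scaling by $4\lambda$, and collecting the $k$ copies of the constants into $k \lambda \delta_t - 4 k \lambda \tr((\Sigma Q_t)^2)$ then reproduces the claimed identity.
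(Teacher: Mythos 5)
Your proposal is correct and follows essentially the same route as the paper: your $z_t^i$ is exactly the paper's $\hat{x}_t^i \stackrel{\mathrm{a.e.}}{=} \mathbb{E}(x_t^i \,|\, \mathcal{F}_{t-1})$, and your derivation of the per-subsystem identity \cref{eq:predictive_variance} is the Tsiamis-style computation the paper defers to \cite[Prop. 1]{tsiamis2020riskconstrained}, followed by the same vectorization as in \Cref{prop:centralized_costs}. The only nuance worth flagging is that the finiteness of the cross moments also uses $\mathbb{E}(\|z_t^i\|^2) < \infty$ (from \Cref{assn:control} and the dynamics), not \Cref{assn:disturbance_c} alone, and your restriction to $t \in \{1, \ldots, T\}$ correctly matches the range on which the independence of $w_t^i$ and $h_{t-1}$ is established.
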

\begin{proof}
    Defining \(\ell_t \triangleq \delta_t -  4 \tr((\Sigma Q_t)^2)\), we find that
    \begin{align}
        \label{eq:predictive_variance}
            \mathbb{E}\left({\Delta_t^i}^2\right)
        &=  4 \mathbb{E}\left({x_t^i}^\top Q_t \Sigma Q_t x_t^i +
            {x_t^i}^\top Q_t \gamma_t\right) + \ell_t
    \end{align}
    using \(\mathbb{E}(\lVert \hat{x}_t^i \rVert^2) < \infty\), \(\mathbb{E}(\lVert d_t^i \rVert^4) < \infty\), and the independence of \(d_t^i\) and \(\hat{x}_t^i\), where \(\hat{x}_t^i\) is \(\mathbb{R}^n\)-valued everywhere and \(\hat{x}_t^i \stackrel{\mathrm{a.e.}}{=} \mathbb{E}(x_t^i \, | \, \mathcal{F}_{t - 1})\). The derivation for \cref{eq:predictive_variance} is involved but largely identical to the proof of \cite[Prop. 1]{tsiamis2020riskconstrained}, with minor modifications to account for the time-varying \(Q_t\) and the extra mean-field dependence in the dynamics. Then, we may use an approach similar to the one in the proof of \Cref{prop:centralized_costs} to obtain the desired result.
\end{proof}

The quadratic and affine terms in \(\mathbf{x}_t\) in the risk-aware cost have dependence on the covariance and skew of the disturbance, respectively, which we can interpret as penalizing state trajectories that are more ``sensitive'' to potential disturbances \cite{tsiamis2020riskconstrained}. Using the previous results, we determine an equivalent centralized alternative to \Cref{prob:mean_field_optimal_control_problem}.

\begin{problem}
    \label{prob:centralized_optimal_control_problem}
    Under \Cref{assn:disturbance,assn:control}, consider the following finite-horizon centralized optimal control problem:
    \begin{equation}
        \begin{alignedat}{2}
            &\minimize  && \mathbb{E}\left( \textstyle c_T^\lambda(\mathbf{x}_T) + \sum_{t = 0}^{T - 1} c_t^\lambda(\mathbf{x}_t) + c_t^u(\mathbf{u}_t)\right) \\
            &\subjectto && \mathbf{x}_{t + 1} = \tilde{A}_t \mathbf{x}_t + \tilde{B}_t \mathbf{u}_t + \mathbf{w}_{t + 1}, \\
            &           && \mathbf{u}_t \in \mathcal{L}^2(\mathcal{F}_t), \quad t \in \mathbb{T},
        \end{alignedat}
    \end{equation}
    where \(c_t^\lambda(\mathbf{x}_t) \triangleq \mathbf{x}_t^\top \tilde{Q}_t^\lambda \mathbf{x}_t + \mathbf{x}_t^\top {\mathbf{b}_t^\lambda}\), \(\tilde{Q}_t^\lambda \triangleq I_k \otimes (Q_t + Q_t^\lambda) + E_k \otimes P_t\), \(Q_t^\lambda \triangleq 4 \lambda Q_t \Sigma Q_t\), \(\mathbf{b}_t^\lambda \triangleq \mathbf{1}_k \otimes b_t^\lambda\), and \(b_t^\lambda \triangleq 4 \lambda Q_t \gamma_t\).
\end{problem}

\begin{proposition}
    \label{prop:equal_problems}
    If \(\mathbf{u}^*\) is an optimal control for \Cref{prob:mean_field_optimal_control_problem} (resp., \Cref{prob:centralized_optimal_control_problem}), then \(\mathbf{u}^*\) is also an optimal control for \Cref{prob:centralized_optimal_control_problem} (resp., \Cref{prob:mean_field_optimal_control_problem}).
\end{proposition}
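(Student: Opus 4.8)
The plan is to show that \Cref{prob:mean_field_optimal_control_problem} and \Cref{prob:centralized_optimal_control_problem} share an identical feasible set and have objectives that differ only by an additive constant independent of $\mathbf{u}$. Since shifting an objective by a control-free constant leaves its set of minimizers unchanged, the optimal controls of the two problems coincide exactly, which establishes both the forward and the ``resp.'' implications simultaneously.

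First I would confirm that the feasible sets match. By \Cref{prop:centralized_dynamics}, the centralized dynamics \cref{eq:centralized_dynamics} are an exact rewriting of \cref{eq:mean_field_dynamics}, so any control $\mathbf{u}$ drives both problems through the same state trajectory $\mathbf{x}_0, \dots, \mathbf{x}_T$ almost surely; the constraints $\mathbf{u}_t \in \mathcal{L}^2(\mathcal{F}_t)$ are literally identical in the two problems (recall from the discussion after \Cref{assn:control} that this is equivalent to $u_t^i \in \mathcal{L}^2(\mathcal{F}_t)$ for all $i$). Hence a control is feasible for one problem exactly when it is feasible for the other, and produces the same trajectory in each.

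Next I would match the objectives time-step by time-step. Fixing a feasible $\mathbf{u}$, I would use \Cref{prop:centralized_costs} to rewrite $c_t^x$ and \Cref{prop:centralized_cost_risk_aware} to expand $\mathbb{E}(c_t^\Delta(\mathbf{x}_t))$, then group the expected state and risk-aware costs at each $t \in \tilde{\mathbb{T}}$:
\begin{equation*}
    \mathbb{E}(c_t^x(\mathbf{x}_t)) + \mathbb{E}(c_t^\Delta(\mathbf{x}_t)) = \mathbb{E}\left(\mathbf{x}_t^\top \tilde{Q}_t^\lambda \mathbf{x}_t + \mathbf{x}_t^\top \mathbf{b}_t^\lambda\right) + k\lambda\left(\delta_t - 4\tr((\Sigma Q_t)^2)\right).
\end{equation*}
This identity follows directly from the definitions in \Cref{prob:centralized_optimal_control_problem}: since $\tilde{Q}_t^\lambda = I_k \otimes (Q_t + Q_t^\lambda) + E_k \otimes P_t$ and $Q_t^\lambda = 4\lambda Q_t \Sigma Q_t$, the extra block $I_k \otimes Q_t^\lambda$ absorbs the quadratic term $4\lambda\,\mathbf{x}_t^\top(I_k \otimes Q_t\Sigma Q_t)\mathbf{x}_t$, the vector $\mathbf{b}_t^\lambda = \mathbf{1}_k \otimes 4\lambda Q_t\gamma_t$ absorbs the affine term, and what remains is exactly the disturbance-only scalar from \Cref{prop:centralized_cost_risk_aware}. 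Summing over $t \in \tilde{\mathbb{T}}$ and adding $\sum_{t \in \mathbb{T}} \mathbb{E}(c_t^u(\mathbf{u}_t))$ would then yield $\mathbb{E}(J(\mathbf{u}))$ equal to the \Cref{prob:centralized_optimal_control_problem} objective plus the constant $k\lambda \sum_{t=0}^{T}(\delta_t - 4\tr((\Sigma Q_t)^2))$.

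Finally I would observe that this additive term depends only on the disturbance statistics $\Sigma, \gamma_t, \delta_t$ and the weights $Q_t$, hence is independent of $\mathbf{u}$, so the two objectives are minimized over the common feasible set by exactly the same controls. The only real care needed is the bookkeeping: verifying that every piece of $\mathbb{E}(c_t^\Delta(\mathbf{x}_t))$ not already captured by $c_t^\lambda$ is genuinely control-free and may therefore be discarded without affecting the argmin. The algebraic matching of the quadratic and affine parts is routine given \Cref{prop:centralized_costs,prop:centralized_cost_risk_aware}.
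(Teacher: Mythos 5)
Your proposal is correct and follows essentially the same route as the paper: equivalence of the constraint sets via \Cref{prop:centralized_dynamics} and the $\mathcal{L}^2$ discussion after \Cref{assn:control}, then matching the objectives via \Cref{prop:centralized_costs,prop:centralized_cost_risk_aware} and the identity $\tilde{Q}_t^\lambda = \tilde{Q}_t + I_k \otimes Q_t^\lambda$, leaving only a control-independent additive constant. You simply spell out the bookkeeping (which terms are absorbed by $\tilde{Q}_t^\lambda$ and $\mathbf{b}_t^\lambda$, and the explicit value of the constant) that the paper leaves implicit.
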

\begin{proof}
    Recall from the discussion of \Cref{assn:control} that \(u_t^i \in \mathcal{L}^2(\mathcal{F}_t)\) for each \(i \in \mathbb{I}\) is equivalent to \(\mathbf{u}_t \in \mathcal{L}^2(\mathcal{F}_t)\). By additionally applying \Cref{prop:centralized_dynamics}, the constraints of the two problems are equivalent. Then, we use  \Cref{prop:centralized_costs,prop:centralized_cost_risk_aware} and observe that \(\tilde{Q}_t^\lambda = \tilde{Q}_t + 4 \lambda (I_k \otimes Q_t \Sigma Q_t) = \tilde{Q}_t + I_k \otimes Q_t^\lambda \) to conclude that the objectives of the two problems only differ by a constant.
\end{proof}

We have now reduced the risk-aware finite-horizon optimal control problem (\Cref{prob:mean_field_optimal_control_problem}) to a problem similar to a standard LQR (\Cref{prob:centralized_optimal_control_problem}), except with an extra affine term in the objective. Instead of solving \Cref{prob:centralized_optimal_control_problem} immediately, however, we observe that the matrices in the centralized formulation admit a particularly useful form. We investigate properties of such matrices next, which will facilitate the process of determining the optimal mean-field coupled control.

\section{Pseudo-Block Diagonal Matrices}
\label{sec:pseudo_block_diagonal_matrices}

For any two matrices \(M\) and \(\bar{M}\) of the same dimensions, consider a matrix of the form (recall that \(E_k = \frac{1}{k} \mathbf{1}_k \mathbf{1}_k^\top\))
\begin{equation}
    \label{eq:pseudo_block_diagonal_form}
    \tilde{M} = I_k \otimes M + E_k \otimes (\bar{M} - M).
\end{equation}
If \(M = \bar{M}\), then \(\tilde{M}\) is simply block diagonal. However, when this is not the case, we can view \(\tilde{M}\) as \emph{pseudo-block diagonal}, i.e., a block diagonal matrix perturbed by another matrix with all blocks identical. Such matrices have useful properties that ``decouple'' the structures of \(M\) and \(\bar{M}\) in some sense, induced by the properties of \(E_k\) and the Kronecker product. First, we prove some useful properties of \(E_k\).

\begin{lemma}
    \label{lemma:e_k_properties}
    For any \(k \in \mathbb{N}\), the following four properties hold: (a) \( E_k^2 = E_k \); (b)   \(E_k^\top = E_k\); (c)    \( E_k \mathbf{1}_k = \mathbf{1}_k \) and \(\mathbf{1}_k^\top E_k = \mathbf{1}_k^\top\); and (d)  \( E_k \in \mathcal{S}^k_{++}\).
\end{lemma}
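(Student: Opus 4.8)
The plan is to derive all four properties directly from the single scalar identity \(\mathbf{1}_k^\top \mathbf{1}_k = k\), exploiting that \(E_k = \frac{1}{k}\mathbf{1}_k\mathbf{1}_k^\top\) is a scaled rank-one outer product. For (a), I would regroup the fourfold product so the inner factor collapses to a scalar:
\[
E_k^2 = \tfrac{1}{k^2}\,\mathbf{1}_k(\mathbf{1}_k^\top\mathbf{1}_k)\mathbf{1}_k^\top = \tfrac{1}{k^2}\,\mathbf{1}_k(k)\mathbf{1}_k^\top = \tfrac{1}{k}\,\mathbf{1}_k\mathbf{1}_k^\top = E_k.
\]
For (b), symmetry is immediate because \((\mathbf{1}_k\mathbf{1}_k^\top)^\top = \mathbf{1}_k\mathbf{1}_k^\top\) and transposition does not affect the scalar \(\frac{1}{k}\). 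For (c), the same regrouping gives \(E_k\mathbf{1}_k = \frac{1}{k}\mathbf{1}_k(\mathbf{1}_k^\top\mathbf{1}_k) = \mathbf{1}_k\), and transposing this identity (using (b)) yields \(\mathbf{1}_k^\top E_k = \mathbf{1}_k^\top\). None of (a)--(c) poses any difficulty.

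The substantive issue is (d). The natural route is to evaluate the quadratic form: for every \(v \in \mathbb{R}^k\),
\[
v^\top E_k v = \tfrac{1}{k}\,(\mathbf{1}_k^\top v)^2 \ge 0,
\]
which, together with the symmetry from (b), establishes that \(E_k\) is symmetric positive semidefinite. However, this same computation exposes what I expect to be the main obstacle: positive \emph{definiteness} as worded cannot hold once \(k \ge 2\). Indeed, \(v^\top E_k v = 0\) precisely when \(\mathbf{1}_k^\top v = 0\), and the hyperplane \(\{v : \mathbf{1}_k^\top v = 0\}\) contains nonzero vectors for \(k \ge 2\) (e.g.\ \(v = (1, -1, 0, \ldots, 0)\)). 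Equivalently, \(E_k\) is rank one, carrying a single nonzero eigenvalue \(1\) with eigenvector \(\mathbf{1}_k\) (by (c)) and eigenvalue \(0\) of multiplicity \(k-1\); strict positivity therefore holds only in the degenerate case \(k = 1\).

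Consequently, the strongest statement I can rigorously prove is \(E_k \in \mathcal{S}^k_+\) rather than \(E_k \in \mathcal{S}^k_{++}\). I expect the semidefinite version to be the intended claim, since the only place the definiteness of \(E_k\) is invoked is the proof of \Cref{prop:centralized_costs}, where \(P_t \in \mathcal{S}^n_+\) and the conclusion \(E_k \otimes P_t \in \mathcal{S}^{nk}_+\) requires only that both Kronecker factors be positive semidefinite, not positive definite. My plan would thus be to prove (d) as positive semidefiniteness via the quadratic-form bound above, and to flag the \(\mathcal{S}^k_{++}\) assertion as a typographical overstatement that no downstream result actually relies on.

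In summary, I would present (a)--(c) as short one-line manipulations of \(\mathbf{1}_k^\top\mathbf{1}_k = k\), and for (d) I would supply the quadratic-form argument while correcting the claimed conclusion to \(E_k \in \mathcal{S}^k_+\); the anticipated difficulty is not computational but rather recognizing and resolving the gap between the stated strict positive definiteness and the semidefiniteness that both holds and suffices throughout the paper.
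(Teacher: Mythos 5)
Your computations for (a)--(c) coincide with the paper's (direct substitution of \(E_k = \frac{1}{k}\mathbf{1}_k\mathbf{1}_k^\top\) and the scalar identity \(\mathbf{1}_k^\top\mathbf{1}_k = k\)), and your treatment of (d) is the same quadratic-form argument the paper uses --- except that you carry it out correctly and the paper does not. The paper's proof asserts that for any nonzero \(x \in \mathbb{R}^k\) one has \(x^\top E_k x = \frac{1}{k}(x^\top\mathbf{1}_k)^2 > 0\) and concludes \(E_k \in \mathcal{S}^k_{++}\); as you point out, this strict inequality fails for every nonzero \(x\) in the hyperplane \(\{x : \mathbf{1}_k^\top x = 0\}\), which is nontrivial whenever \(k \geq 2\). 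Since \(E_k\) is a rank-one symmetric idempotent (by your (a) and (b)) with eigenvalue \(1\) on the span of \(\mathbf{1}_k\) and eigenvalue \(0\) with multiplicity \(k - 1\), the claim \(E_k \in \mathcal{S}^k_{++}\) is simply false for \(k \geq 2\), and your corrected conclusion \(E_k \in \mathcal{S}^k_+\) is the strongest true statement. Your audit of the downstream usage is also accurate: part (d) is invoked only in the proof of \Cref{prop:centralized_costs}, where \(P_t \in \mathcal{S}^n_+\) and the Kronecker-product fact being applied (that \(X \otimes Y\) is positive semidefinite when both factors are) requires only \(E_k \in \mathcal{S}^k_+\), so nothing in the paper breaks under your correction. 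In short, your proposal is correct, follows essentially the same route as the paper, and identifies a genuine (though harmless) error in both the statement and the printed proof of the lemma.
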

\begin{proof}
    Properties (a)--(c) follow directly from algebraic simplification after substituting the definition \(E_k = \frac{1}{k} \mathbf{1}_k \mathbf{1}_k^\top\). Since \(E_k\) is symmetric and for any nonzero \(x \in \mathbb{R}^k\), we have \(\textstyle x^\top E_k x = \frac{1}{k} x^\top \mathbf{1}_k \mathbf{1}_k^\top x = \frac{1}{k} (x^\top \mathbf{1}_k)^2 > 0 \), we conclude that \(E_k \in \mathcal{S}^k_{++}\).
\end{proof}

The properties in \Cref{lemma:e_k_properties} also hold for \(I_k\), therefore \(E_k\) behaves like a ``pseudo-identity'' matrix. Next, we review properties of the Kronecker product for the reader's convenience.

\begin{lemma}
    \label{lemma:kronecker_properties}
    For any matrices \(A\), \(B\), \(C\), and \(D\) of appropriate dimensions and \(\lambda \in \mathbb{R}\), the following properties hold:
    \begin{subequations}
        \begin{align}
            \label{eq:kronecker_transpose}
            (A \otimes B)^\top &= A^\top \otimes B^\top, \\
            \label{eq:kronecker_multiplication_scalar}
            (\lambda A) \otimes B &= \lambda (A \otimes B) = A \otimes (\lambda B), \\
            \label{eq:kronecker_addition_right}
            A \otimes (B + C) &= A \otimes B + A \otimes C, \\
            \label{eq:kronecker_addition_left}
            (A + B) \otimes C &= A \otimes C + B \otimes C, \\
            \label{eq:kronecker_multiplication}
            (A \otimes B) (C \otimes D) &= A C \otimes B D, \\
            \label{eq:kronecker_inverse}
            (A \otimes B)^{-1} &= A^{-1} \otimes B^{-1},
        \end{align}
    \end{subequations}
    where \cref{eq:kronecker_inverse} holds if and only if \(A\) and \(B\) are invertible.
\end{lemma}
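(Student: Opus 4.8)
The plan is to treat these as standard consequences of the block definition of the Kronecker product, deriving the one algebraically substantive identity, the mixed-product rule \cref{eq:kronecker_multiplication}, first and then reading off the remaining properties either from it or directly from the definition. Writing $A \otimes B$ as the block matrix whose $(i,j)$ block equals $a_{ij} B$, the transpose rule \cref{eq:kronecker_transpose}, the scalar rule \cref{eq:kronecker_multiplication_scalar}, and the two distributivity rules \cref{eq:kronecker_addition_right,eq:kronecker_addition_left} each follow immediately by comparing corresponding blocks on both sides, since transposition, scaling, and addition all act block-by-block in the obvious way and require no real computation.

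The central step is the mixed-product property \cref{eq:kronecker_multiplication}. Here I would compute the $(i,k)$ block of $(A \otimes B)(C \otimes D)$ by block matrix multiplication, summing over the intermediate block index $j$ to get $\sum_j (a_{ij} B)(c_{jk} D) = \bigl(\sum_j a_{ij} c_{jk}\bigr) BD = (AC)_{ik}\, BD$, which is precisely the $(i,k)$ block of $AC \otimes BD$. This single calculation is the only part that is not purely formal, and it is where I expect any bookkeeping to go wrong: one must keep the block dimensions of $A$, $B$, $C$, and $D$ compatible so that both $AC$ and $BD$, and hence every block product above, are well-defined.

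Finally, the inverse rule \cref{eq:kronecker_inverse} follows from the mixed-product property. When $A$ and $B$ are invertible, \cref{eq:kronecker_multiplication} gives $(A \otimes B)(A^{-1} \otimes B^{-1}) = (A A^{-1}) \otimes (B B^{-1}) = I \otimes I$, so $A^{-1} \otimes B^{-1}$ is the inverse. For the ``only if'' direction I would invoke the determinant identity $\det(A \otimes B) = (\det A)^s (\det B)^p$ for square $A \in \mathbb{R}^{p \times p}$ and $B \in \mathbb{R}^{s \times s}$, so that $A \otimes B$ is invertible exactly when $\det A \neq 0$ and $\det B \neq 0$, i.e., exactly when both factors are invertible. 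Since all of these are well-known textbook facts, an acceptable shortcut is simply to cite \cite[Ch. 2]{hardy2019matrix}; nevertheless I would retain the short block-multiplication argument for \cref{eq:kronecker_multiplication} to keep the pseudo-block diagonal matrix algebra that follows self-contained.
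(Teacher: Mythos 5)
Your proposal is correct and takes essentially the same route as the paper: the paper's proof simply states that all the properties follow from straightforward algebraic manipulation and cites \cite[Ch.~2]{hardy2019matrix}, and your block-matrix derivations (including the mixed-product computation and the determinant argument for the ``only if'' direction of \cref{eq:kronecker_inverse}) are exactly the standard manipulations that citation points to. No gaps; you have merely written out the details the paper omits.
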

\begin{proof}
    All the properties above follow from straightforward algebraic manipulation (e.g., see \cite[Ch. 2]{hardy2019matrix} for details).
\end{proof}

Now, we define a family of maps from a pair of matrices to a pseudo-block diagonal matrix.
\begin{definition}
    For each \(k\), \(m\), and \(n\), all natural numbers, we define \(\varphi_k^{m , n} : \mathbb{R}^{m \times n} \times \mathbb{R}^{m \times n} \to \mathbb{R}^{m k \times n k}\) as \(\varphi_k^{m , n}(A, \bar{A}) \triangleq I_k \otimes A + E_k \otimes (\bar{A} - A)\).
\end{definition}
We  allow slight abuse of notation by writing \(\varphi_k\) instead of \(\varphi_k^{m , n}\), where \(m\) and \(n\) are understood from the inputs. \(\varphi_k\) presents a convenient representation of pseudo-block diagonal matrices and simplifies many computations because of the properties it inherits from \(E_k\) and the Kronecker product.

\begin{proposition}
    \label{prop:pseudo_diagonal_properties}
    For any matrices \(A\), \(\bar{A}\), \(B\), \(\bar{B}\), \(C\), and \(\bar{C}\) of appropriate dimensions, column vector \(v\) of appropriate dimension, and \(\lambda \in \mathbb{R}\), the following properties hold:
    \begin{subequations}
        \begin{align}
            \label{eq:pseudo_diagonal_transpose}
            \varphi_k(A, \bar{A})^\top &= \varphi_k(A^\top, \bar{A}^\top), \\
            \label{eq:pseudo_diagonal_multiplication_scalar}
            \lambda \varphi_k(A, \bar{A}) &= \varphi_k(\lambda A, \lambda \bar{A}), \\
            \label{eq:pseudo_diagonal_addition}
            \varphi_k(A, \bar{A}) + \varphi_k(B, \bar{B}) &= \varphi_k(A + B, \bar{A} + \bar{B}), \\
            \label{eq:pseudo_diagonal_multiplication_vector}
            \varphi_k(A, \bar{A})(\mathbf{1}_k \otimes v) &= \mathbf{1}_k \otimes \bar{A} v, \\
            \label{eq:pseudo_diagonal_multiplication}
            \varphi_k(A, \bar{A}) \varphi_k(C, \bar{C}) &= \varphi_k(A C, \bar{A} \bar{C}), \\
            \label{eq:pseudo_diagonal_inverse}
            (\varphi_k(A, \bar{A}))^{-1} &= \varphi_k(A^{-1}, \bar{A}^{-1}),
        \end{align}
    \end{subequations}
    where the last line holds if \(A\) and \(\bar{A}\) are invertible.
\end{proposition}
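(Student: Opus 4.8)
The plan is to establish each identity by substituting the definition $\varphi_k(A,\bar{A}) = I_k \otimes A + E_k \otimes (\bar{A} - A)$ and invoking the elementary properties of $E_k$ (\Cref{lemma:e_k_properties}) and of the Kronecker product (\Cref{lemma:kronecker_properties}). The first three identities are immediate. For \cref{eq:pseudo_diagonal_transpose}, I would apply the Kronecker transpose rule \cref{eq:kronecker_transpose} termwise, together with the symmetry of $I_k$ and $E_k$, so that $(I_k \otimes A + E_k \otimes (\bar{A}-A))^\top = I_k \otimes A^\top + E_k \otimes (\bar{A}^\top - A^\top)$, which is exactly $\varphi_k(A^\top, \bar{A}^\top)$. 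The scalar identity \cref{eq:pseudo_diagonal_multiplication_scalar} follows from \cref{eq:kronecker_multiplication_scalar} by distributing $\lambda$ across both summands, and the additive identity \cref{eq:pseudo_diagonal_addition} follows from \cref{eq:kronecker_addition_right} by collecting the $I_k$- and $E_k$-coefficients and observing that $(\bar{A}-A) + (\bar{B}-B) = (\bar{A}+\bar{B}) - (A+B)$.

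For \cref{eq:pseudo_diagonal_multiplication_vector}, I would apply the mixed-product rule \cref{eq:kronecker_multiplication} to each summand acting on $\mathbf{1}_k \otimes v$, obtaining $(I_k \otimes A)(\mathbf{1}_k \otimes v) = \mathbf{1}_k \otimes A v$ and $(E_k \otimes (\bar{A}-A))(\mathbf{1}_k \otimes v) = (E_k \mathbf{1}_k) \otimes (\bar{A}-A) v$. The key is then the property $E_k \mathbf{1}_k = \mathbf{1}_k$ from \Cref{lemma:e_k_properties}, which lets me factor out $\mathbf{1}_k$ and combine $A v + (\bar{A}-A) v = \bar{A} v$ to reach $\mathbf{1}_k \otimes \bar{A} v$.

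The main obstacle is the multiplicative identity \cref{eq:pseudo_diagonal_multiplication}. Here I would expand the product $\varphi_k(A,\bar{A})\varphi_k(C,\bar{C})$ into four terms by distributivity and simplify each via the mixed-product rule \cref{eq:kronecker_multiplication}. Three of the four resulting Kronecker factors involve $I_k$ and collapse straightforwardly, but the fourth produces $E_k E_k = E_k^2$, where I would crucially invoke the idempotence $E_k^2 = E_k$ from \Cref{lemma:e_k_properties}. This leaves $I_k \otimes AC$ plus $E_k$ tensored with the second-factor sum $A(\bar{C}-C) + (\bar{A}-A)C + (\bar{A}-A)(\bar{C}-C)$; expanding and cancelling shows this sum equals $\bar{A}\bar{C} - AC$, which yields $\varphi_k(AC, \bar{A}\bar{C})$. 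The careful bookkeeping in this cancellation is the one genuinely computational step, and it is precisely the idempotence of $E_k$ that makes the pseudo-block diagonal form closed under multiplication.

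Finally, the inverse identity \cref{eq:pseudo_diagonal_inverse} follows directly from the multiplicative identity just proved: when $A$ and $\bar{A}$ are invertible, \cref{eq:pseudo_diagonal_multiplication} gives $\varphi_k(A,\bar{A})\varphi_k(A^{-1}, \bar{A}^{-1}) = \varphi_k(AA^{-1}, \bar{A}\bar{A}^{-1}) = \varphi_k(I, I)$, and since $\varphi_k(I,I) = I_k \otimes I + E_k \otimes 0 = I_{nk}$ is the identity matrix, $\varphi_k(A^{-1}, \bar{A}^{-1})$ is the two-sided inverse of $\varphi_k(A, \bar{A})$. I expect everything after \cref{eq:pseudo_diagonal_multiplication} to be essentially free, so the proof effort concentrates entirely on the expansion and cancellation in the multiplicative identity.
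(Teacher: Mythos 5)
Your proposal is correct and follows essentially the same route as the paper: each identity is obtained by substituting the definition of \(\varphi_k\) and invoking \Cref{lemma:kronecker_properties}, with the idempotence \(E_k^2 = E_k\) and the absorption \(E_k \mathbf{1}_k = \mathbf{1}_k\) from \Cref{lemma:e_k_properties} doing the work in \cref{eq:pseudo_diagonal_multiplication} and \cref{eq:pseudo_diagonal_multiplication_vector}, respectively, and \cref{eq:pseudo_diagonal_inverse} deduced by multiplying against the candidate inverse. The expansion-and-cancellation bookkeeping you identify as the main computational step is exactly the calculation the paper carries out.
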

\begin{proof}
    \Cref{eq:pseudo_diagonal_transpose,eq:pseudo_diagonal_multiplication_scalar,eq:pseudo_diagonal_addition} follow from \Cref{lemma:kronecker_properties}. \Cref{eq:pseudo_diagonal_multiplication_vector} can be verified by applying \Cref{lemma:e_k_properties,lemma:kronecker_properties}:
    \begin{align*}
        \varphi_k(A, \bar{A}) (\mathbf{1}_k \otimes v)
        &=  I_k \mathbf{1}_k \otimes A v + E_k \mathbf{1}_k \otimes (\bar{A} - A) v \\
        &=  \mathbf{1}_k \otimes A v + \mathbf{1}_k \otimes (\bar{A} - A) v = \mathbf{1}_k \otimes \bar{A} v.
    \end{align*}
    \Cref{eq:pseudo_diagonal_multiplication} also follows from the previous lemmas:
    \begin{align*}
        & \varphi_k(A, \bar{A}) \varphi_k(C, \bar{C}) \\
        &\hspace{1mm}=  I_k \otimes A C + E_k \otimes ( (\bar{A} - A) C + A (\bar{C} - C) ) \\ &\hspace{1mm}\phantom{{}=} + E_k^2 \otimes (\bar{A} - A) (\bar{C} - C) \\
        &\hspace{1mm}=  I_k \otimes A C \\ &\phantom{{}=} + E_k \otimes ((\bar{A} - A) C + A (\bar{C} - C) + (\bar{A} - A) (\bar{C} - C)) \\
        &\hspace{1mm}=  I_k \otimes A C + E_k \otimes (\bar{A} \bar{C} - A C).
    \end{align*}
    Then, \cref{eq:pseudo_diagonal_inverse} follows from using \cref{eq:pseudo_diagonal_multiplication} to multiply by the proposed inverse and showing that the product is the identity.
\end{proof}

To see the utility of \Cref{prop:pseudo_diagonal_properties}, we observe that the dynamics and cost matrices in the centralized reformulation admit a pseudo-block diagonal form and can be written in terms of \(\varphi_k\).

\begin{lemma}
    \label{lemma:centralized_matrices_phi_k_form}
    We have \(\tilde{A}_t = \varphi_k(A_t, \bar{A}_t)\), \(\tilde{B}_t = \varphi_k(B_t, B_t)\), \(\tilde{Q}_t^\lambda = \varphi_k(Q_t + Q_t^\lambda, \bar{Q}_t + Q_t^\lambda)\), and \(\tilde{R}_t = \varphi_k(R_t, R_t)\) for each \(t\), where \(\bar{A}_t \triangleq A_t + C_t\), \(\bar{Q}_t \triangleq P_t + Q_t\), and the remaining matrices are as in the dynamics and costs defined earlier.
\end{lemma}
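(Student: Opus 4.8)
The plan is to verify each of the four identities by direct substitution into the definition $\varphi_k(A, \bar{A}) = I_k \otimes A + E_k \otimes (\bar{A} - A)$ and to match the result against the centralized matrices defined in \Cref{prop:centralized_dynamics,prop:centralized_costs} and \Cref{prob:centralized_optimal_control_problem}. No machinery beyond the definition of $\varphi_k$ is needed here; the entire computation reduces to simplifying the perturbation factor $\bar{A} - A$ for each pair of arguments, and each identity holds pointwise for a fixed $t$.

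For the two genuinely block-diagonal cases, $\tilde{B}_t = \varphi_k(B_t, B_t)$ and $\tilde{R}_t = \varphi_k(R_t, R_t)$, I would observe that the second argument coincides with the first, so $\bar{A} - A = 0$, the $E_k$ term drops out, and we are left with $I_k \otimes B_t = \tilde{B}_t$ and $I_k \otimes R_t = \tilde{R}_t$, respectively. For $\tilde{A}_t = \varphi_k(A_t, \bar{A}_t)$, I would substitute $\bar{A}_t = A_t + C_t$ so that $\bar{A}_t - A_t = C_t$, which recovers $I_k \otimes A_t + E_k \otimes C_t = \tilde{A}_t$ exactly as defined in \Cref{prop:centralized_dynamics}.

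The only case warranting slightly more care is $\tilde{Q}_t^\lambda$, where the risk correction $Q_t^\lambda$ appears inside \emph{both} arguments. Substituting $Q_t + Q_t^\lambda$ and $\bar{Q}_t + Q_t^\lambda$ into $\varphi_k$, the difference of the two arguments is $(\bar{Q}_t + Q_t^\lambda) - (Q_t + Q_t^\lambda) = \bar{Q}_t - Q_t$, so the $Q_t^\lambda$ contributions cancel in the $E_k$ factor; using $\bar{Q}_t = P_t + Q_t$ then gives $\bar{Q}_t - Q_t = P_t$, yielding $I_k \otimes (Q_t + Q_t^\lambda) + E_k \otimes P_t = \tilde{Q}_t^\lambda$, which matches the definition in \Cref{prob:centralized_optimal_control_problem}.

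Since every step is routine algebraic substitution, there is no substantive obstacle to overcome; the one point of care is simply to confirm that the $Q_t^\lambda$ terms, placed symmetrically in both arguments of $\varphi_k$, cancel in the perturbation factor, leaving precisely the $E_k \otimes P_t$ term present in the original definition of $\tilde{Q}_t^\lambda$.
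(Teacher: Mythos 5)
Your proof is correct and is precisely the routine substitution the paper relies on implicitly --- the paper in fact states \Cref{lemma:centralized_matrices_phi_k_form} without any proof, so your direct verification (in particular noting that the $Q_t^\lambda$ terms cancel in the $E_k$ factor, leaving $E_k \otimes P_t$) fills in exactly the intended argument. No issues.
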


The results in this section allow us to view the centralized matrices in a decoupled manner. Given any pseudo-block diagonal matrix \(\tilde{M} = \varphi_k(M, \bar{M})\) in \Cref{lemma:centralized_matrices_phi_k_form} encoding information about the centralized system, \(M\) and \(\bar{M}\) encode the corresponding information about a subsystem with state \(x_t^i - \bar{x}_t\) and the (mean-field) subsystem with state \(\bar{x}_t\), respectively; e.g., \(\tilde{A}_t = \varphi_k(A_t, \bar{A}_t)\), \(A_t\), and \(\bar{A}_t\) are the state update matrices for the centralized system, a subsystem with state \(x_t^i - \bar{x}_t\), and the mean-field subsystem with state \(\bar{x}_t\), respectively.

\section{Optimal Control}
\label{sec:optimal_control}

We present the centralized optimal controller and use the findings from \Cref{sec:pseudo_block_diagonal_matrices} to derive its mean-field coupled form.
\begin{theorem}[Centralized Optimal Control]
    \label{thm:centralized_control}
    The unique centralized optimal control solution to \Cref{prob:centralized_optimal_control_problem} for \(t \in \mathbb{T}\) is
    \begin{equation}\label{eq:centralized_optimal_control}
        \mathbf{u}_t^* = \tilde{K}_t \mathbf{x}_t + \mathbf{f}_t,
    \end{equation}
    where the feedback matrices and affine terms are
    \begin{align}
        \tilde{K}_t &\triangleq - (\tilde{R}_t + \tilde{B}_t^\top \tilde{S}_{t + 1} \tilde{B}_t)^{-1} \tilde{B}_t^\top \tilde{S}_{t + 1} \tilde{A}_t, \\
        \mathbf{f}_t &\triangleq \textstyle - (\tilde{R}_t + \tilde{B}_t^\top \tilde{S}_{t + 1} \tilde{B}_t)^{-1} \tilde{B}_t^\top (\tilde{S}_{t + 1} \bm{\mu} + \frac{1}{2} \mathbf{g}_{t + 1}), \\
        \intertext{and where \(\tilde{S}_t\) and \(\mathbf{g}_t\) are}
        \begin{split}
                \tilde{S}_t
            &\triangleq  - \tilde{A}_t^\top \tilde{S}_{t + 1} \tilde{B}_t (\tilde{R}_t + \tilde{B}_t^\top \tilde{S}_{t + 1} \tilde{B}_t)^{-1} \tilde{B}_t^\top \tilde{S}_{t + 1} \tilde{A}_t \\
            &\phantom{{}=} + \tilde{A}_t^\top \tilde{S}_{t + 1} \tilde{A}_t + \tilde{Q}_t^\lambda,
        \end{split} \\
        \mathbf{g}_t &\triangleq (\tilde{A}_t + \tilde{B}_t \tilde{K}_t)^\top (2 \tilde{S}_{t + 1} \bm{\mu} + \mathbf{g}_{t + 1}) + \mathbf{b}_t^\lambda,
    \end{align}
    with \(\mathbf{g}_T \triangleq \mathbf{b}_T^\lambda\), \(\tilde{S}_T \triangleq \tilde{Q}_T^\lambda\), and \(\bm{\mu} \triangleq \mathbf{1}_k \otimes \mu\).
\end{theorem}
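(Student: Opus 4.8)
The plan is to solve \Cref{prob:centralized_optimal_control_problem} by backward induction on a dynamic programming recursion, using the quadratic-plus-affine value function ansatz
\begin{equation*}
    V_t(\mathbf{x}) = \mathbf{x}^\top \tilde{S}_t \mathbf{x} + \mathbf{x}^\top \mathbf{g}_t + r_t,
\end{equation*}
where $r_t \in \mathbb{R}$ is a constant absorbing all terms independent of the state. Since the terminal cost is $c_T^\lambda(\mathbf{x}_T) = \mathbf{x}_T^\top \tilde{Q}_T^\lambda \mathbf{x}_T + \mathbf{x}_T^\top \mathbf{b}_T^\lambda$, the boundary data $\tilde{S}_T = \tilde{Q}_T^\lambda$, $\mathbf{g}_T = \mathbf{b}_T^\lambda$, and $r_T = 0$ are immediate and match the claimed initialization. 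The recursion is legitimate because the disturbance independence (a consequence of \Cref{assn:disturbance,assn:control} together with \Cref{prop:centralized_dynamics}) makes the problem Markovian, so that even though \Cref{assn:control} permits $\mathbf{u}_t$ to depend on the full history $\mathcal{F}_t$, the cost-to-go depends only on $\mathbf{x}_t$. First I would establish that each stage minimization is well posed: since $\tilde{R}_t \in \mathcal{S}^{mk}_{++}$ and $\tilde{S}_{t+1} \in \mathcal{S}^{nk}_+$ (the terminal $\tilde{Q}_T^\lambda$ is positive semidefinite because $Q_t$, $Q_t^\lambda = 4\lambda Q_t \Sigma Q_t$, and $P_t$ are all positive semidefinite while $E_k \in \mathcal{S}^k_{++}$ by \Cref{lemma:e_k_properties}, and the Riccati recursion preserves positive semidefiniteness), the Hessian $\tilde{R}_t + \tilde{B}_t^\top \tilde{S}_{t + 1} \tilde{B}_t$ is positive definite and invertible.

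The core of the argument is the induction step. Assuming $V_{t+1}$ has the ansatz form, I would expand $\mathbb{E}(V_{t+1}(\tilde{A}_t \mathbf{x}_t + \tilde{B}_t \mathbf{u}_t + \mathbf{w}_{t+1}))$ by using that $\mathbf{w}_{t+1}$ is independent of $\mathbf{x}_t$ and $\mathbf{u}_t$ and has mean $\bm{\mu} = \mathbf{1}_k \otimes \mu$. Writing $\mathbf{y} \triangleq \tilde{A}_t \mathbf{x}_t + \tilde{B}_t \mathbf{u}_t$, the disturbance contributes a linear term $2\mathbf{y}^\top \tilde{S}_{t+1} \bm{\mu}$ through its mean and a trace term $\mathbb{E}(\mathbf{w}_{t+1}^\top \tilde{S}_{t+1} \mathbf{w}_{t+1})$ that is absorbed into $r_t$. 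Adding the per-step costs and setting the gradient in $\mathbf{u}_t$ to zero yields a linear system whose unique solution is exactly $\mathbf{u}_t^* = \tilde{K}_t \mathbf{x}_t + \mathbf{f}_t$ with the stated $\tilde{K}_t$ and $\mathbf{f}_t$; the factor $\tfrac{1}{2}\mathbf{g}_{t+1}$ in $\mathbf{f}_t$ appears because the affine part of $V_{t+1}$ enters the stage objective once whereas the quadratic part enters twice.

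It then remains to substitute $\mathbf{u}_t^*$ back and verify that $V_t$ retains the ansatz form. Collecting the quadratic-in-$\mathbf{x}_t$ terms reproduces the Riccati recursion for $\tilde{S}_t$ via the standard completion-of-squares computation. For the affine part, the key simplification is that the cross terms coupling $\mathbf{x}_t$ to the affine offset $\mathbf{f}_t$ cancel: using the definition of $\tilde{K}_t$ one checks that $(\tilde{A}_t + \tilde{B}_t \tilde{K}_t)^\top \tilde{S}_{t+1} \tilde{B}_t + \tilde{K}_t^\top \tilde{R}_t = 0$, so the only surviving linear terms are $\mathbf{b}_t^\lambda$ from the stage cost and $(\tilde{A}_t + \tilde{B}_t \tilde{K}_t)^\top (2\tilde{S}_{t+1}\bm{\mu} + \mathbf{g}_{t+1})$ from the value-to-go, which is precisely the claimed recursion for $\mathbf{g}_t$. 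I expect this cancellation and the careful bookkeeping of which affine contributions carry a factor of two to be the main obstacle, since a single sign or factor error propagates through the entire backward sweep.

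Finally, I would confirm admissibility and uniqueness. Admissibility holds because $\mathbf{x}_t$ is $\mathcal{F}_t$-measurable with $\mathbb{E}(\|\mathbf{x}_t\|^2) < \infty$ (noted after \Cref{assn:control}) and $\tilde{K}_t, \mathbf{f}_t$ are deterministic, so that $\mathbf{u}_t^* \in \mathcal{L}^2(\mathcal{F}_t)$. Uniqueness at each stage follows from the strict convexity guaranteed by the positive definite Hessian, and propagating this backward through the horizon establishes that $\mathbf{u}^* = (\mathbf{u}_0^*, \ldots, \mathbf{u}_{T-1}^*)$ is the unique centralized optimal control.
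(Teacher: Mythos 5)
Your proposal is correct and follows essentially the same route as the paper, which simply invokes dynamic programming with the quadratic-plus-affine value function ansatz \(V_t(\mathbf{x}_t) = \mathbf{x}_t^\top \tilde{S}_t \mathbf{x}_t + \mathbf{x}_t^\top \mathbf{g}_t + r_t\) and omits the derivation as standard. Your completion-of-squares details, the cross-term cancellation \((\tilde{A}_t + \tilde{B}_t \tilde{K}_t)^\top \tilde{S}_{t+1} \tilde{B}_t + \tilde{K}_t^\top \tilde{R}_t = 0\), and the admissibility and uniqueness arguments all check out and faithfully fill in what the paper leaves implicit.
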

\begin{proof}
    This result can be shown by applying dynamic programming (e.g., see \cite[Ch. 6]{kumar2015stochastic}). The optimal value function takes the form \(V_t(\mathbf{x}_t) = \mathbf{x}_t^\top \tilde{S}_t \mathbf{x}_t + \mathbf{x}_t^\top \mathbf{g}_t + r_t\), where \(\tilde{S}_t \in \mathcal{S}_+^{nk}\) and \(r_t \in \mathbb{R}\). This is standard, so we omit the derivation.
\end{proof}

We now apply the properties in \Cref{prop:pseudo_diagonal_properties} to decompose the previous solution into a mean-field coupled form.

\begin{theorem}[Mean-Field Coupled Optimal Control]
    \label{thm:mean_field_coupled_control}
    \newcounter{customsubequation}
    \renewcommand{\thecustomsubequation}{\theequation\alph{customsubequation}}
    The unique mean-field coupled optimal control solution to \Cref{prob:mean_field_optimal_control_problem} for \(t \in \mathbb{T}\) is
    \begin{align}\label{eq:mean_field_coupled_optimal_control}
        {u_t^i}^* = K_t x_t^i + (\bar{K}_t - K_t) \bar{x}_t + f_t, \quad i \in \mathbb{I},
    \end{align}
    where the feedback matrices and affine terms are
    \begin{align}
        \setcounter{customsubequation}{0}\refstepcounter{equation}
        \refstepcounter{customsubequation}\tag{\thecustomsubequation}
        K_t &\triangleq - (R_t + B_t^\top S_{t + 1} B_t)^{-1} B_t^\top S_{t + 1} A_t, \\
        \refstepcounter{customsubequation}\tag{\thecustomsubequation}
        \bar{K}_t &\triangleq - (R_t + B_t^\top \bar{S}_{t + 1} B_t)^{-1} B_t^\top \bar{S}_{t + 1} \bar{A}_t, \\
        f_t &\triangleq \textstyle - (R_t + B_t^\top \bar{S}_{t + 1} B_t)^{-1} B_t^\top (\bar{S}_{t + 1} \mu + \frac{1}{2} g_{t + 1}), \\
        \intertext{and where \(S_t\), \(\bar{S}_t\), and \(g_t\) are}
        \setcounter{customsubequation}{0}\refstepcounter{equation}
        \refstepcounter{customsubequation}\tag{\thecustomsubequation}
        \begin{split}
            S_t &\triangleq - A_t^\top S_{t + 1} B_t (R_t + B_t^\top S_{t + 1} B_t)^{-1} B_t^\top S_{t + 1} A_t \\ &\phantom{{}=} + A_t^\top S_{t + 1} A_t + Q_t + Q_t^\lambda,
        \end{split} \\
        \refstepcounter{customsubequation}\tag{\thecustomsubequation}
        \begin{split}
            \bar{S}_t &\triangleq - \bar{A}_t^\top \bar{S}_{t + 1} B_t (R_t + B_t^\top \bar{S}_{t + 1} B_t)^{-1} B_t^\top \bar{S}_{t + 1} \bar{A}_t \\ &\phantom{{}=} + \bar{A}_t^\top \bar{S}_{t + 1} \bar{A}_t + \bar{Q}_t + Q_t^\lambda,
        \end{split} \\
        g_t &\triangleq (\bar{A}_t + B_t \bar{K}_t)^\top (2 \bar{S}_{t + 1} \mu + g_{t + 1}) + b_t^\lambda,
    \end{align}
    with \(g_T \triangleq b_T^\lambda\), \(S_T \triangleq Q_T + Q_T^\lambda\), \(\bar{S}_T \triangleq \bar{Q}_T + Q_T^\lambda\), \(\bar{A}_t\) and \(\bar{Q}_t\) as in \Cref{lemma:centralized_matrices_phi_k_form}, and \(Q_t^\lambda\) and \(b_t^\lambda\) as in \Cref{prob:centralized_optimal_control_problem}.
\end{theorem}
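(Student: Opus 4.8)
The plan is to show that every matrix and vector appearing in \Cref{thm:centralized_control} is either pseudo-block diagonal or of the form $\mathbf{1}_k \otimes (\cdot)$, and then read off the subsystem-wise control by extracting a single block of $\mathbf{u}_t^*$. The engine throughout is the homomorphism-like behaviour of $\varphi_k$ recorded in \Cref{prop:pseudo_diagonal_properties}, together with \Cref{lemma:centralized_matrices_phi_k_form}, which already writes $\tilde{A}_t$, $\tilde{B}_t$, $\tilde{Q}_t^\lambda$, and $\tilde{R}_t$ in $\varphi_k$ form.

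First I would prove by backward induction on $t$ that $\tilde{S}_t = \varphi_k(S_t, \bar{S}_t)$ with $S_t, \bar{S}_t \in \mathcal{S}^n_+$. The base case $t = T$ is immediate, since $\tilde{S}_T = \tilde{Q}_T^\lambda = \varphi_k(Q_T + Q_T^\lambda, \bar{Q}_T + Q_T^\lambda)$ matches $S_T$ and $\bar{S}_T$, and both arguments lie in $\mathcal{S}^n_+$ because $Q_T^\lambda = 4\lambda Q_T \Sigma Q_T \in \mathcal{S}^n_+$ (a congruence of $\Sigma \in \mathcal{S}^n_+$ by the symmetric $Q_T$, with $\lambda \geq 0$). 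For the inductive step I substitute $\tilde{S}_{t+1} = \varphi_k(S_{t+1}, \bar{S}_{t+1})$ into the centralized Riccati recursion and collapse the products and sums using \cref{eq:pseudo_diagonal_transpose}, \cref{eq:pseudo_diagonal_addition}, \cref{eq:pseudo_diagonal_multiplication}, and, crucially, \cref{eq:pseudo_diagonal_inverse}: the term $\tilde{R}_t + \tilde{B}_t^\top \tilde{S}_{t+1} \tilde{B}_t$ becomes $\varphi_k(R_t + B_t^\top S_{t+1} B_t,\, R_t + B_t^\top \bar{S}_{t+1} B_t)$, whose inverse is the $\varphi_k$ of the two blockwise inverses. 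Matching the first and second arguments then reproduces exactly the stated $S_t$ and $\bar{S}_t$ recursions, and the standard LQR argument (with $R_t \in \mathcal{S}^m_{++}$, $Q_t + Q_t^\lambda \in \mathcal{S}^n_+$, and $\bar{Q}_t + Q_t^\lambda \in \mathcal{S}^n_+$) keeps both blocks in $\mathcal{S}^n_+$. The identical collapse applied to the definition of $\tilde{K}_t$ yields $\tilde{K}_t = \varphi_k(K_t, \bar{K}_t)$.

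Next I would treat the affine part by showing $\mathbf{g}_t = \mathbf{1}_k \otimes g_t$ and $\mathbf{f}_t = \mathbf{1}_k \otimes f_t$, again by backward induction. Here the key is \cref{eq:pseudo_diagonal_multiplication_vector}, which states $\varphi_k(A, \bar{A})(\mathbf{1}_k \otimes v) = \mathbf{1}_k \otimes \bar{A} v$: any vector of the form $\mathbf{1}_k \otimes v$ is mapped to a vector of the same form that sees \emph{only} the mean-field block $\bar{A}$. Since $\bm{\mu} = \mathbf{1}_k \otimes \mu$ and $\mathbf{b}_t^\lambda = \mathbf{1}_k \otimes b_t^\lambda$, one checks that $\tilde{S}_{t+1}\bm{\mu} = \mathbf{1}_k \otimes \bar{S}_{t+1}\mu$ and that $(\tilde{A}_t + \tilde{B}_t \tilde{K}_t)^\top = \varphi_k((A_t + B_t K_t)^\top, (\bar{A}_t + B_t \bar{K}_t)^\top)$, so the recursion for $\mathbf{g}_t$ closes with only the barred quantities surviving, giving precisely the $g_t$ recursion; the same mechanism gives $\mathbf{f}_t = \mathbf{1}_k \otimes f_t$. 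This also explains structurally why the affine corrections depend only on the mean-field subsystem.

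Finally I would assemble the control. Writing $\mathbf{u}_t^* = \varphi_k(K_t, \bar{K}_t)\mathbf{x}_t + \mathbf{1}_k \otimes f_t$ and noting that $(E_k \otimes M)\mathbf{x}_t$ has $i$-th block $M\bar{x}_t$ (from $E_k = \frac{1}{k}\mathbf{1}_k\mathbf{1}_k^\top$), the $i$-th block of $\varphi_k(K_t,\bar{K}_t)\mathbf{x}_t = (I_k \otimes K_t + E_k \otimes (\bar{K}_t - K_t))\mathbf{x}_t$ is $K_t x_t^i + (\bar{K}_t - K_t)\bar{x}_t$, while the $i$-th block of $\mathbf{1}_k \otimes f_t$ is $f_t$; this is exactly \cref{eq:mean_field_coupled_optimal_control}, and uniqueness transfers from \Cref{thm:centralized_control} via \Cref{prop:equal_problems}. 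The main obstacle I anticipate is the invertibility bookkeeping: \cref{eq:pseudo_diagonal_inverse} requires \emph{both} $R_t + B_t^\top S_{t+1} B_t$ and $R_t + B_t^\top \bar{S}_{t+1} B_t$ to be invertible, so the positive semidefiniteness of $S_{t+1}$ and $\bar{S}_{t+1}$ must be carried as part of the induction hypothesis rather than invoked after the fact.
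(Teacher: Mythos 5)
Your proposal is correct and follows essentially the same route as the paper's proof: backward induction to establish $\tilde{S}_t = \varphi_k(S_t, \bar{S}_t)$ via the $\varphi_k$ properties (with the inverse identity applied to $\tilde{R}_t + \tilde{B}_t^\top \tilde{S}_{t+1}\tilde{B}_t$), then $\tilde{K}_t = \varphi_k(K_t,\bar{K}_t)$, $\mathbf{g}_t = \mathbf{1}_k \otimes g_t$ and $\mathbf{f}_t = \mathbf{1}_k \otimes f_t$ via \cref{eq:pseudo_diagonal_multiplication_vector}, and finally reading off the $i$th block row, with optimality and uniqueness transferred through \Cref{prop:equal_problems}. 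Your remark about carrying positive semidefiniteness of $S_{t+1}$ and $\bar{S}_{t+1}$ in the induction hypothesis to justify the invertibility in \cref{eq:pseudo_diagonal_inverse} is exactly the bookkeeping the paper's proof also performs.
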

\begin{proof}
    Since \(\mathbf{u}^* = (\mathbf{u}_0^*, \mathbf{u}_1^*, \ldots, \mathbf{u}_{T-1}^*)\) \cref{eq:centralized_optimal_control}  is optimal for \Cref{prob:centralized_optimal_control_problem} by \Cref{thm:centralized_control}, \(\mathbf{u}^*\) is also optimal for \Cref{prob:mean_field_optimal_control_problem} by \Cref{prop:equal_problems}. Recalling that \(\mathbf{u}_t^* = ({u_t^1}^*, \ldots, {u_t^k}^*)\), our task is to extract \({u_t^i}^*\) from \(\mathbf{u}_t^*\). We can prove that \(\tilde{S}_t = \varphi_k(S_t, \bar{S}_t)\), where \(S_t \in \mathcal{S}_+^n\) and \(\bar{S}_t \in \mathcal{S}_+^n\), for each \(t \in \tilde{\mathbb{T}}\) via backwards induction. For the base case, observe that \(\tilde{S}_T = \tilde{Q}_T^\lambda = \varphi_k(Q_T + Q_T^\lambda, \bar{Q}_T + Q_T^\lambda) = \varphi_k(S_T, \bar{S}_T)\) using \Cref{thm:centralized_control} and \Cref{lemma:centralized_matrices_phi_k_form}, where \(S_T \in \mathcal{S}_+^n\) and \(\bar{S}_T \in \mathcal{S}_+^n\). For the inductive step, first assume that \(\tilde{S}_{t + 1} = \varphi_k(S_{t + 1}, \bar{S}_{t + 1})\), where \(S_{t+1} \in \mathcal{S}_+^n\) and \(\bar{S}_{t+1} \in \mathcal{S}_+^n\), for some \(t \in \mathbb{T}\).  Then, \Cref{lemma:centralized_matrices_phi_k_form} and \Cref{prop:pseudo_diagonal_properties} (with invertibility of certain matrices) lead to
    \begin{align*}
        & (\tilde{R}_t + \tilde{B}_t^\top \tilde{S}_{t + 1} \tilde{B}_t)^{-1} \\
        &\hspace{1mm}=  (\tilde{R}_t + \varphi_k(B_t, B_t)^\top \varphi_k(S_{t + 1}, \bar{S}_{t + 1}) \varphi_k(B_t, B_t))^{-1} \\
        &\hspace{1mm}=  (\tilde{R}_t + \varphi_k(B_t^\top, B_t^\top) \varphi_k(S_{t + 1} B_t, \bar{S}_{t + 1} B_t))^{-1} \\
        &\hspace{1mm}=  (\varphi_k(R_t, R_t) + \varphi_k(B_t^\top S_{t + 1} B_t, B_t^\top \bar{S}_{t + 1} B_t))^{-1} \\
        &\hspace{1mm}=  (\varphi_k(R_t + B_t^\top S_{t + 1} B_t, R_t + B_t^\top \bar{S}_{t + 1} B_t))^{-1} \\
        &\hspace{1mm}=  \varphi_k((R_t + B_t^\top S_{t + 1} B_t)^{-1}, (R_t + B_t^\top \bar{S}_{t + 1} B_t)^{-1}).
    \end{align*}
    For space considerations, we omit the rest of the inductive step, in which we can similarly apply the properties in \Cref{prop:pseudo_diagonal_properties} to conclude that \(\tilde{S}_t = \varphi_k(S_t, \bar{S}_t)\) and also show that \(S_t\) and \(\bar{S}_t\) are symmetric positive semidefinite.

    Next, we use \(\tilde{S}_t = \varphi_k(S_t, \bar{S}_t)\) to verify that \(\tilde{K}_t = \varphi_k(K_t, \bar{K}_t)\), again by applying the same properties. We can also show that \(\mathbf{g}_t = \mathbf{1}_k \otimes g_t\) for each \(t \in \tilde{\mathbb{T}}\) by induction, where we use \cref{eq:pseudo_diagonal_multiplication_vector} in particular. For each \(t \in \mathbb{T}\), the equality \(\mathbf{f}_t = \mathbf{1}_k \otimes f_t\) follows from \(\tilde{S}_{t+1} = \varphi_k(S_{t+1}, \bar{S}_{t+1})\), \(\mathbf{g}_{t+1} = \mathbf{1}_k \otimes g_{t+1}\), \Cref{lemma:centralized_matrices_phi_k_form}, and \Cref{prop:pseudo_diagonal_properties}. Finally, we observe the block row of \(\mathbf{u}_t^* = \tilde{K}_t \mathbf{x}_t + \mathbf{f}_t = \varphi_k(K_t, \bar{K}_t) \mathbf{x}_t + \mathbf{1}_k \otimes f_t\) corresponding to the \(i\)th subsystem to find that \({u_t^i}^* = \textstyle K_t x_t^i + \frac{1}{k} \sum_{j = 1}^k (\bar{K}_t - K_t) x_t^j + f_t\), which is equivalent to \cref{eq:mean_field_coupled_optimal_control}.
\end{proof}

The solution pathway in this section is not specific to predictive variance. In particular, a similar pathway may be used to solve an optimal control problem for mean-field coupled subsystems with an alternative risk formulation, such as the exponential utility, by first reformulating the problem in terms of pseudo-block diagonal matrices and then following an approach similar to the proof of \Cref{thm:mean_field_coupled_control}.

Note that when \(\lambda = 0\) and the disturbance is zero-mean, the optimal control in \Cref{thm:mean_field_coupled_control} is identical to the (risk-neutral) state-feedback optimal control in \cite[Th. 1]{arabneydi2015teamoptimal}. Also, the optimal control in \Cref{thm:mean_field_coupled_control} is analogous to the one in \cite[Th. 1]{roudneshin2023risk}, but without the redundant affine term evaluating to zero.

\section{Numerical Example}
\label{sec:numerical_example}

\begin{figure*}[htbp]
    \centering

    \subfloat[Average Subsystem State Energy vs. \(t\)]{%
        \includegraphics[width=0.44\linewidth]{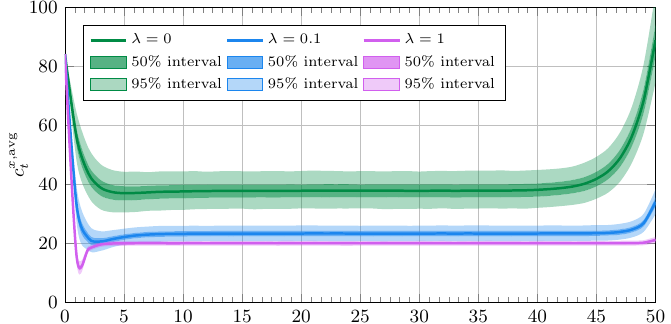}%
        \label{fig:avg_state_energy_vs_t}%
    }%
    \hspace*{0.06\linewidth}%
    \subfloat[Maximum Subsystem State Energy vs. \(t\)]{%
        \includegraphics[width=0.44\linewidth]{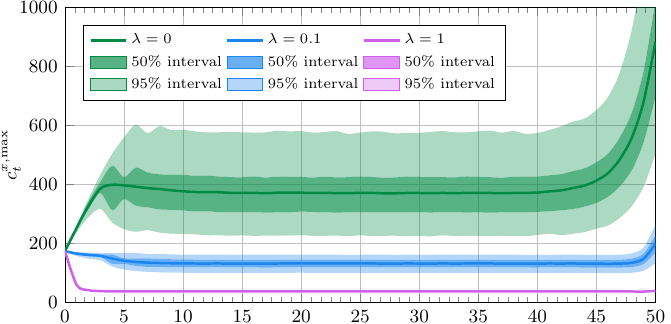}%
        \label{fig:max_state_energy_vs_t}%
    } \\

    \subfloat[Time Average of State Energy Statistics vs. \(\lambda\)]{%
        \includegraphics[width=0.44\linewidth]{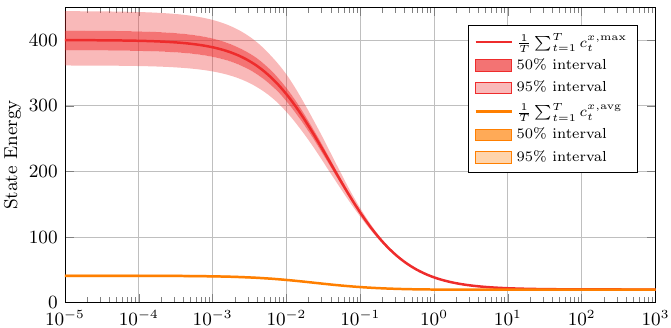}%
        \label{fig:state_energy_vs_lambda}%
    }%
    \hspace*{0.06\linewidth}%
    \subfloat[Time Average of Control Effort Statistics vs. \(\lambda\)]{%
        \includegraphics[width=0.44\linewidth]{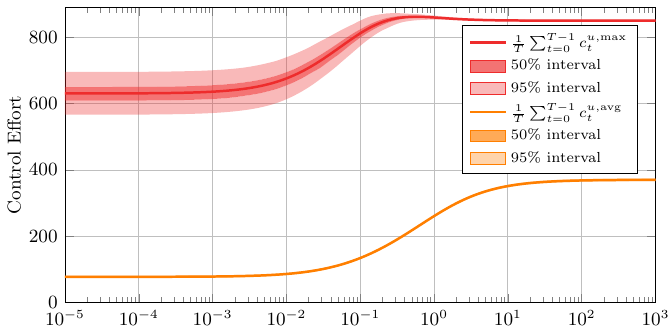}%
        \label{fig:control_effort_vs_lambda}%
    }

    \caption{Plots of \protect\subref{fig:avg_state_energy_vs_t} average state energy \(c_t^{x, \mathrm{avg}} \triangleq \frac{1}{k} \sum_{i \in \mathbb{I}} (x_t^i)^\top Q_t x_t^i\) vs. \(t\), \protect\subref{fig:max_state_energy_vs_t} maximum state energy \(c_t^{x, \mathrm{max}} \triangleq \max_{i \in \mathbb{I}} (x_t^i)^\top Q_t x_t^i\) vs. \(t\), \protect\subref{fig:state_energy_vs_lambda} time average of state energy statistics, \(c_t^{x, \mathrm{avg}}\) and \(c_t^{x, \mathrm{max}}\), vs. \(\lambda\), and \protect\subref{fig:control_effort_vs_lambda} time average of control effort statistics, \(c_t^{u, \mathrm{avg}} \triangleq \frac{1}{k} \sum_{i \in \mathbb{I}} (u_t^i)^\top R_t u_t^i\) and \(c_t^{u, \mathrm{max}} \triangleq \max_{i \in \mathbb{I}} (u_t^i)^\top R_t u_t^i\), vs. \(\lambda\). The horizontal axis labels are \protect\subref{fig:avg_state_energy_vs_t} time \(t\), \protect\subref{fig:max_state_energy_vs_t} time \(t\), \protect\subref{fig:state_energy_vs_lambda} \(\lambda\), and \protect\subref{fig:control_effort_vs_lambda} \(\lambda\). The bold lines and shaded regions indicate the empirical mean values and observed credible intervals, respectively, of the plotted quantities among the \(10^4\) simulations.}
    \label{fig:numerical_sim_plots}
\end{figure*}

Consider \(k = 250\) mean-field coupled LQ subsystems over a time horizon of length \(T = 50\) with parameters \(A_t = 1.1\), \(B_t = 0.3\), \(C_t = 0.2\), \(P_t = 0.4\), \(Q_t = 0.8\), and \(R_t = 1.2\), with the deterministic initial states \(x_0^1, \ldots, x_0^k\) being fixed realizations from \(\mathcal{N}(10, 2)\), and with \(w_t^i \sim 10 (\operatorname{Bernoulli}(0.25) - 0.25)\), a simple example of a random variable with nonzero skew. We run \(10^4\) simulations, each with an independently generated disturbance process, and then plot statistics of average and maximum subsystem state energy and control effort in \Cref{fig:numerical_sim_plots} (see caption for definitions).

In \Cref{fig:avg_state_energy_vs_t,fig:max_state_energy_vs_t}, the empirical means of \(c_t^{x, \mathrm{avg}}\) and \(c_t^{x, \mathrm{max}}\) decrease as \(\lambda\) increases, indicating improved regulation of \emph{all} subsystems in general, rather than just the mean-field (recall \cref{eq:mean_field_costs}). The substantial decrease in \(c_t^{x, \mathrm{max}}\) in particular shows that increasing \(\lambda\) can provide regulation for even the worst-case subsystem. In contrast, the worst-case subsystem in the risk-neutral case may have no regulation, with its state energy \emph{increasing} significantly over time.

In \Cref{fig:state_energy_vs_lambda,fig:control_effort_vs_lambda}, the empirical variability in the time averages of \(c_t^{x, \mathrm{max}}\) and \(c_t^{u, \mathrm{max}}\) decrease as \(\lambda\) increases, demonstrating the controller's robustness by making the performance of the risk-aware controller highly predictable regardless of the specific realization of the disturbance process. Furthermore, with increasing \(\lambda\), \(c_t^{x, \mathrm{max}}\) decreases dramatically, while \(c_t^{u, \mathrm{max}}\) increases less significantly. However, \(c_t^{x, \mathrm{avg}}\) decreases marginally compared to the increase in \(c_t^{u, \mathrm{avg}}\), indicating an increase in average total cost compared to the risk-neutral case. So, \(\lambda\) presents a trade-off to sacrifice average performance for less volatility and improved worst-case performance.

\section{Conclusion}
\label{sec:conclusion}

We present and solve a risk-aware social optimal control problem for mean-field coupled subsystems using predictive variance. Our solution pathway involves reformulating the problem in terms of pseudo-block diagonal matrices, which enjoy many notable properties not specific to predictive variance. This pathway is useful for extending existing solutions for uncoupled systems to mean-field coupled settings, and in particular, may help extend the results of this work to a partially observable setting, e.g., see \cite{koumpis2022stateoutput}. Another direction of interest is to explore alternative risk or cost formulations which may admit a centralized formulation with pseudo-block diagonal matrices, such as distributionally robust constraints \cite{vanParys2016distributionally} or variance suppression \cite{fujimoto2011optimal}.

\section*{Acknowledgment}

The authors thank Dr. Shuang Gao for fruitful discussions.

% Generated by IEEEtran.bst, version: 1.14 (2015/08/26)

\end{document}